%% file: main.tex
\PassOptionsToPackage{dvipsnames}{xcolor}
\documentclass[sigplan,10pt,screen,nonacm]{acmart}
\renewcommand\footnotetextcopyrightpermission[1]{}
\AtBeginDocument{%
  }

\input{packages}

\input{macros}
\input{figures/exp_info}

\begin{document}

\title{\toolname{}: Opportunistically Exploiting Parallelism  in Python Compound AI Applications}

\author{Stephen Mell}
\orcid{0009-0003-7469-8974}
\affiliation{%
  \institution{University of Pennsylvania}
  \city{Philadelphia}
  \state{Pennsylvania}
  \country{USA}
}
\email{sm1@cis.upenn.edu}

\author{David Mell}
\orcid{0009-0008-0827-6120}
\affiliation{%
  \institution{Independent Researcher}
  \city{Juneau}
  \state{Alaska}
  \country{USA}
}
\email{zraexy@gmail.com}

\author{Konstantinos Kallas}
\orcid{0000-0002-8984-6648}
\affiliation{%
  \institution{University of California, Los Angeles}
  \city{Los Angeles}
  \state{California}
  \country{USA}
}
\email{kkallas@ucla.edu}

\author{Steve Zdancewic}
\orcid{0000-0002-3516-1512}
\affiliation{%
  \institution{University of Pennsylvania}
  \city{Philadelphia}
  \state{Pennsylvania}
  \country{USA}
}
\email{stevez@cis.upenn.edu}

\author{Osbert Bastani}
\orcid{0000-0001-9990-7566}
\affiliation{%
  \institution{University of Pennsylvania}
  \city{Philadelphia}
  \state{Pennsylvania}
  \country{USA}
}
\email{obastani@seas.upenn.edu}

\renewcommand{\shortauthors}{Trovato et al.}

\begin{abstract}
\input{abstract}
\end{abstract}

\settopmatter{printfolios=true}
\maketitle
\pagestyle{plain}

\input{sec-intro}

\input{sec-motivating}

\input{sec-technical}

\input{sec-evaluation}

\input{sec-related-work}

\input{sec-conclusion}
\bibliographystyle{ACM-Reference-Format}
\bibliography{references}

\end{document}

%% file: packages.tex
\usepackage[frozencache=true]{minted}
\usepackage{subcaption}
\usepackage{xcolor}
\usepackage{balance}
\usepackage{tikz}
\usetikzlibrary{positioning, arrows.meta, shapes.geometric, calc, fit, tikzmark}

%% file: macros.tex
\newcommand{\sm}[1]{}
\newcommand{\ob}[1]{}
\newcommand{\sz}[1]{}
\newcommand{\kk}[1]{}
\newcommand{\smf}[1]{}

\newcommand{\toolname}[0]{\textsc{PopPy}}
\newcommand{\toolacronym}[0]{\textbf{P}arallel \textbf{Op}portunistic \textbf{Py}thon}
\newcommand{\pysubsetname}[0]{\toolname{}}
\newcommand{\opalname}[0]{$\lambda^O$}

\newmintinline[pyinline]{python}{}

\newcommand{\heading}[1]{\par\vspace{3pt}\noindent\textbf{#1}.\enspace}

%% file: figures/exp_info.tex
\newcommand{\ExpToTPyTotalTime}[0]{142}
\newcommand{\ExpToTPyLLMTime}[0]{135}
\newcommand{\ExpToTOurTotalTime}[0]{23}
\newcommand{\ExpToTOurSpeedupRatioOfMedians}[0]{6.1}

\newcommand{\ExpSetupCPU}[0]{24-core Intel(R) Xeon(R) Gold 6342 CPU @ 2.80GHz}
\newcommand{\ExpSetupRAM}[0]{1007GB RAM}
\newcommand{\ExpSetupOS}[0]{Debian GNU/Linux 11 (bullseye)}
\newcommand{\ExpSetupKernel}[0]{5.10.0-34-amd64}
\newcommand{\ExpSetupPython}[0]{3.13.7}
\newcommand{\ExpSetupOpenAI}[0]{2.16.0}
\newcommand{\ExpToolLoC}[0]{5427}
\newcommand{\ExpToolLoCCompiler}[0]{2988}
\newcommand{\ExpToolLoCRuntime}[0]{2439}
\newcommand{\ExpNumTrials}[0]{10}
\newcommand{\ExpScalingNumTrials}[0]{10}
\newcommand{\ExpCaMeLNumProgs}[0]{30}
\newcommand{\ExpCaMeLNumProgsErroneous}[0]{9}
\newcommand{\ExpCaMeLNumProgsAll}[0]{39}
\newcommand{\ExpPerfTimePyMinMS}[0]{0.06}
\newcommand{\ExpPerfTimePyMaxS}[0]{142}

\newcommand{\ExpPerfFactorManyLLMTaskMax}[0]{6.4}
\newcommand{\ExpPerfFactorSomeLLMTaskGeomean}[0]{1.8}
\newcommand{\ExpPerfFactorSomeLLMTaskMin}[0]{0.9}
\newcommand{\ExpPerfFactorSomeLLMTaskMax}[0]{6.4}

\newcommand{\ExpPerfSlowdownAbsoluteNoLLMTaskMin}[0]{-0.0}
\newcommand{\ExpPerfSlowdownAbsoluteNoLLMTaskMax}[0]{49.1}

\newcommand{\ExpCompilationTimeTaskMin}[0]{0.34}
\newcommand{\ExpCompilationTimeTaskMax}[0]{51.15}
\newcommand{\ExpRuntimeOverheadAbsoluteNoLLMMin}[0]{0.01}
\newcommand{\ExpRuntimeOverheadAbsoluteNoLLMMax}[0]{208.35}
\newcommand{\ExpRuntimeOverheadAbsoluteLLMMin}[0]{1.55}
\newcommand{\ExpRuntimeOverheadAbsoluteLLMMaxS}[0]{3.77}
\newcommand{\ExpRuntimeOverheadRelativeLLMMin}[0]{0.15}
\newcommand{\ExpRuntimeOverheadRelativeLLMMax}[0]{11.18}

\newcommand{\ExpScalingBIRDRelativeSpeedupMin}[0]{1.51}
\newcommand{\ExpScalingBIRDRelativeSpeedupMax}[0]{10.82}
\newcommand{\ExpScalingToTRelativeSpeedupMin}[0]{2.39}
\newcommand{\ExpScalingToTRelativeSpeedupMax}[0]{9.80}
\newcommand{\ExpAnnotationsNumMethods}[0]{336}

\newcommand{\ExpAnnotationsTypesDisplay}[0]{\texttt{bool}, \texttt{int}, \texttt{float}, \texttt{complex}, \texttt{str}, \texttt{bytes}, \texttt{tuple}, \texttt{frozenset}, \texttt{frozendict}, \texttt{date}, \texttt{time}, \texttt{datetime}, \texttt{timedelta}, \texttt{NoneType}, \texttt{type}, enums, Pydantic \texttt{BaseModel}s (frozen)}

%% file: abstract.tex
Compound AI applications, which compose calls to ML models using a general-purpose programming language like Python, are widely used for a variety of user-facing tasks, from software engineering to enterprise automation, making their end-to-end latency a critical bottleneck.
In contrast to traditional applications, execution time is dominated by the external components, which cannot be
handled
by traditional language optimization systems, like optimizing compilers.
 
To address this problem, we develop \toolname{}, a system that can uncover parallelization opportunities in Python applications that invoke these heavy external components, including those used in compound AI applications.
\toolname{} supports a very expressive fragment of Python and requires minimal developer input to uncover parallelism.
It combines an ahead-of-time compiler with a runtime, addressing three key challenges in extracting parallelism from Python applications: language complexity, dynamic dispatch, and variable mutation.
On a set of real-world compound AI applications, \toolname{} achieves up to $\ExpPerfFactorManyLLMTaskMax{}\times$ speedups in end-to-end execution time compared to standard Python execution while preserving the sequential program semantics.

%% file: sec-intro.tex
\section{Introduction}
\label{sec:introduction}

Despite the tremendous progress in machine learning (ML), especially in large language models (LLMs), individual models still struggle to reliably solve complex tasks.
As a result, there has been a shift toward \emph{compound AI applications}~\citep{compound-ai-blog} that programmatically compose multiple components including ML models. Prominent examples include retrieval-augmented generation~\citep{rag}, which combines LLMs with knowledge-base lookup; coding agents~\citep{sweagent}, which use shell utilities and spawn subagents; and AI mathematics systems, which leverage LLM-guidance for proof~\citep{alphageometry,funsearch}.
While frameworks exist for building these applications, developers often resort to directly making LLM calls in a general-purpose language such as Python~\citep{octoverse2025}. Most notably, frameworks largely target one of two specific paradigms: (1) agents, where an LLM is placed in a loop and given the ability to invoke programmatic tools~\citep{langgraph,guidance,lmql,openai-agents-sdk}; or (2) workflows, where LLM calls and programmatic components are orchestrated as a dataflow graph~\citep{langchain,n8n,dataflow-llm,murakkab}. However, many compound AI applications do not fit cleanly into these categories, such as LLM-guided search~\citep{treeofthoughts,funsearch}, multi-agent systems~\citep{anthropic2025multiagent,dae}, and agents that themselves generate workflows~\citep{suris2023vipergpt,camel,rlm}, thereby necessitating use of a general-purpose language. In addition to flexibility, general-purpose languages also come with rich ecosystems of libraries and tooling.

A key challenge with compound AI applications is that they perform multiple calls to ML models, leading to very slow end-to-end execution times. To address this issue, prior work has focused on optimizing individual model calls~\citep{flashattention,mirage}. However, this strategy ignores potential optimizations at the program level. A prominent instance is that these applications often exhibit substantial opportunities for parallelism---e.g., in a multi-agent system, each agent can be run in parallel between rounds of communication.
While recent work has proposed custom frameworks and domain-specific languages that expose such parallelism in compound AI applications~\cite{langgraph,guidance,lmql,openai-agents-sdk,langchain,n8n,dataflow-llm,murakkab}, these frameworks have limited expressiveness and cannot support sophisticated applications---e.g.,
they all lack the control flow provided by general purpose languages.
Perhaps more surprisingly, the long line of work optimizing general-purpose languages also fails to exploit parallelism in compound AI applications---the reason is that the main bottleneck is not in the code written in the language, but rather blackbox code in external calls.
Thus, developers must manually parallelize their code using threading and asynchronous programming, wasting valuable programmer effort, increasing application complexity, and reducing maintainability.

We propose \toolname{}\footnote{\toolname{}: \toolacronym{}}, a system that exploits parallelism in compound AI applications written as sequential Python code, with minimal developer intervention. The developer annotates parts of the program either as ``internal'' for core program logic (written in an expressive Python subset) or ``external'' for calls to blackbox code (e.g., LLMs). In addition, they annotate which external calls are reorderable---e.g., LLM calls are stateless and thus reorderable, but printing is not.
At runtime, \toolname{} executes the internal program eagerly and out-of-order,
enabling calls to
external code
(e.g., long-running LLM calls)
to be executed in parallel when possible. Crucially, \toolname{} guarantees that as long as the annotations are correct, then the program output is equivalent to standard Python execution.
Finally, \toolname{} includes annotations for much of the Python standard library as well as common ML models, thereby minimizing the number of annotations that the developer must provide.

To parallelize real Python code, \toolname{} solves three key challenges:
(1) Python's language complexity, (2) dynamic dispatch, which obscures the reorderability of method calls, and (3) variable mutation.
Specifically, \toolname{} solves (1) by compiling Python to an existing minimal language \opalname{} that supports \emph{opportunistic} out-of-order execution~\cite{opal}; (2) by delegating reordering decisions to runtime controllers; and (3) by optimizing common patterns of variable mutation.

We evaluate \toolname{} on a five compound AI applications from the literature, including the LLM-guided search application Tree-of-Thoughts~\citep{treeofthoughts} and the multi-agent application Diverse\-Agent\-Entropy~\citep{dae}.
We also evaluate on 30 programs generated by the CaMeL~\citep{camel} agent.
\toolname{} improves execution time of applications that have parallelization opportunities by up to $\ExpPerfFactorManyLLMTaskMax{}\times$ 
compared to standard Python execution, while preserving the sequential program semantics.

In summary, \toolname{} contributes the following:
\begin{enumerate}
\item A two-phase compiler that handles Python's complexity by transpiling all Python features to a core calculus $\lambda^O$~\cite{opal}, which can uncover parallelization opportunities between external calls at runtime (\S\ref{sec:compiler}).
\item A dynamic concurrency control subsystem together with an annotation framework for external calls, which enables determining which calls can run in parallel at runtime in the presence of dynamic dispatch (\S\ref{sec:concurrency-control}).
\item Optimizations for key variable mutation patterns, enabling parallelism without changing the behavior of the original application (\S\ref{sec:variable-mutation}).
\end{enumerate}
Before diving into the key technical contributions, we show a motivating example~(\S\ref{sec:motivating}), provide an overview of~(\S\ref{sec:system}), and describe its user interface in detail~(\S\ref{sec:interface}).
After the technical contributions we evaluate \toolname{}~(\S\ref{sec:evaluation}), and provide a discussion of related work~(\S\ref{sec:related-work}).

%% file: sec-motivating.tex
\section{Motivating Example}
\label{sec:motivating}

We begin providing some background on compound AI applications~(\S\ref{sec:motivating:building-compound-ai}). 
We then describe Tree of Thoughts~\cite{treeofthoughts}, a characteristic example of a compound AI application~(\S\ref{sec:motivating:tot}), and show how \toolname{} can extract parallelism from it~(\S\ref{sec:motivating:with-poppy}).

\subsection{Building Compound AI Applications}
\label{sec:motivating:building-compound-ai}

A \emph{compound AI application}\footnote{Sometimes called compound AI \emph{systems}; we use ``applications'' to avoid confusion with the system we are proposing, \toolname{}.}
is 
a programmatic composition of AI components, including LLMs, computer vision models, knowledge bases, search procedures, and external tools.
They have been shown to perform better than individual AI models, by breaking up problems into smaller components and solving each one individually. Because of this, they are now widely used in a variety of domains, from software engineering~\cite{tree-of-code,jimenez2024swebench} to enterprise automation~\cite{langchain,n8n}.%

There are two particularly popular categories of compound AI application, each of which is supported by its own set of domain specific languages and programming frameworks:
(1) \emph{agents}, where the programmer provides the components to an LLM in a loop, and the LLM chooses when and how to invoke them~\citep{langgraph,guidance,lmql,openai-agents-sdk}, and
(2) \emph{workflows}, where the programmer composes the components in a domain-specific language, typically as a dataflow graph~\citep{langchain,n8n,dataflow-llm,murakkab}.
However, many applications do not fit cleanly into either of these categories---e.g., performing guided search over LLM outputs~\citep{treeofthoughts,funsearch}, multi-agent applications where the agents communicate in complex ways to solve tasks~\citep{anthropic2025multiagent,dae}, %
and applications where agents may themselves generate workflows of LLM calls (possibly including recursive calls to other agents)~\citep{camel,suris2023vipergpt,rlm}. Many realistic systems exhibit these kinds of complexity, leading developers to develop compound AI applications in general-purpose programming languages, such as Python, which lack expressiveness limitations~\cite{traq,treeofthoughts,tree-of-code,camel,dae,bird,suris2023vipergpt}. In addition, Python comes with a rich ecosystem of existing libraries and tools that can be useful for building compound AI applications.

\subsection{Example: Tree of Thoughts}
\label{sec:motivating:tot}

A characteristic example of a compound AI application is Tree of Thoughts~\citep{treeofthoughts} (ToT), a search procedure that uses LLM calls to propose and prioritize search states to be explored.
ToT has been used for a wide variety of tasks, including logical proof search~\citep{he2025logictree}, robot manipulation planning~\citep{xu2025embodiedtreethoughtsdeliberate}, and enterprise domain modeling~\citep{totdomainmodeling}.

Figure~\ref{fig:motivating:tot} shows the core of a ToT implementation in \toolname{}-supported Python, slightly adapted from the original authors' implementation~\citep{treeofthoughts-impl}.
It takes as input a string describing the task, e.g., to generate code to fix a bug, and then performs \pyinline{NUM_}\allowbreak\pyinline{STEPS} rounds of beam search.%
For each state, e.g., a bug-fix candidate, it 
calls an LLM (\pyinline{llm_get_proposals}) to get successor states, e.g., refined bug-fix candidates. 
Then \pyinline{get_}\allowbreak\pyinline{values} is called to score each new state, the output of which is passed to \pyinline{topk}, which returns the \pyinline{BEAM_WIDTH} highest-scoring candidate states to continue the searching from.

The \pyinline{get_values} function iterates over all states and scores them using another LLM call (\pyinline{llm_get_value}), e.g., how likely is a candidate to actually fix the bug.
It uses a cache to avoid redundant LLM calls for already scored states, 
and calls \pyinline{print} to log new and duplicate states.
The functions \pyinline{llm_}\allowbreak\pyinline{get_proposals} and \pyinline{llm_get_value} first format a prompt, then call \pyinline{llm}, which makes an HTTP request to a remote LLM API (e.g., GPT~\citep{gpt}, Claude~\citep{claude}, or Gemini~\citep{gemini}), and finally parse the result into the desired output type.

\begin{figure}[t]
    \raggedright
    \tikzmark{totmintedtop}
    \begin{minted}[linenos,xleftmargin=1.75em,fontsize=\small]{python}
@poppy
def tree_of_thoughts(task):
    states = ("",)
    for step in range(NUM_STEPS):
        new_states = tuple()
        for s in states:
            new_states += llm_get_proposals(task, s)
        values = get_values(task, new_states)
        states = topk(states, values, BEAM_WIDTH)
        print(states)
    return states
@poppy
def get_values(task, states):
    value_cache = frozenset()
    values = tuple()
    for idx, state in enumerate(states):
        if state in value_cache:
            value = 0
            print(f"{idx}: duplicate")
        else:
            value = llm_get_value(task, state)
            value_cache |= {state}
            print(f"{idx}: {value}")
        values += (value,)
    return values
@poppy
def llm_get_proposals(task, state): ...
@poppy
def llm_get_value(task, state): ...

# Library Functions
@sequential
def print(line): ...
@unordered
async def llm(prompt): ...
    \end{minted}
    \tikzmark{totmintedbottom}%
    \begin{tikzpicture}[overlay, remember picture]
        \draw[red, line width=1pt]
            ($(pic cs:totmintedtop)!0.5!(pic cs:totmintedbottom) + (3em, 50pt)$) --
            ($(pic cs:totmintedtop)!0.5!(pic cs:totmintedbottom) + (3em, -80pt)$);
        \draw[orange, line width=1pt]
            ($(pic cs:totmintedtop)!0.5!(pic cs:totmintedbottom) + (4.8em, 18pt)$) --
            ($(pic cs:totmintedtop)!0.5!(pic cs:totmintedbottom) + (4.8em, -68pt)$);
        \draw[olive, line width=1pt]
            ($(pic cs:totmintedtop)!0.5!(pic cs:totmintedbottom) + (6.6em, 8pt)$) --
            ($(pic cs:totmintedtop)!0.5!(pic cs:totmintedbottom) + (6.6em, -14pt)$);
        \draw[CarnationPink, line width=1pt]
            ($(pic cs:totmintedtop)!0.5!(pic cs:totmintedbottom) + (6.6em, -27pt)$) --
            ($(pic cs:totmintedtop)!0.5!(pic cs:totmintedbottom) + (6.6em, -57pt)$);
    \end{tikzpicture}%
    \caption{Tree-of-Thoughts~\cite{treeofthoughts} implementation in Python. The \pyinline{@_} lines are the annotations that need to be added to a program to be supported by \toolname{}. L1-L29 are provided by the application developer, while L32-L35 are provided by library developers. The colored bars indicate specific code blocks referenced in Figure~\ref{fig:motivating:execution}.
    }
    \label{fig:motivating:tot}
\end{figure}

\heading{Problem: Performance}
The ToT application shown in Fig.~\ref{fig:motivating:tot}
is very slow:
solving a challenging arithmetic reasoning task~\citep{treeofthoughts} on a \ExpSetupCPU{} machine with Python \ExpSetupPython{} and gpt-3.5-turbo as the model, it takes \ExpToTPyTotalTime{} seconds, out of which \ExpToTPyLLMTime{} seconds are spent waiting for LLM calls to complete.
Virtually all of this time is spent blocking, waiting for the remote LLM API to return a response.
Since the performance bottleneck for ToT is outside of the program, standard language optimizations like just-in-time compilation are ineffective---they optimize the Python code but not external calls.

\heading{Opportunity: Parallelism}
Even though the LLM calls are the bottleneck of this application, many of them are actually independent, and so could be executed in parallel, as soon as their arguments (prompts) are ready.
Achieving such parallelism would require aggressive rewriting of the application, using threading and async/await, while also being careful to preserve dependencies across calls and loop iterations to not violate the original sequential semantics.

\subsection{Parallelizing with \toolname{}}
\label{sec:motivating:with-poppy}
\begin{figure}[t]
    \centering
    \includegraphics[width=\columnwidth]{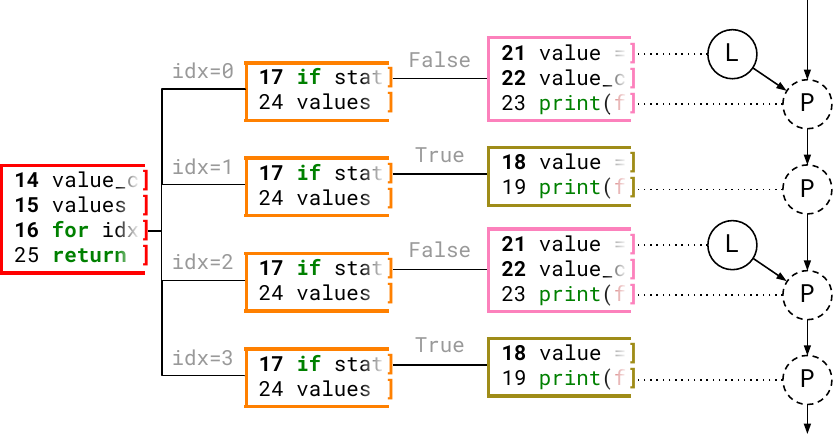}
    \caption{
        Illustration of the execution of \pyinline{get_values} with \pyinline{states = ("a", "a", "b", "b")}, after queueing all external calls but before any have resolved. \emph{Internal code execution tree} (left): Each code block that is executed at runtime is shown, including duplicates from different loop iterations; block borders are colored to correspond to source blocks in Figure~\ref{fig:motivating:tot}; statements are shown with line numbers; bold line numbers indicate statements that execute without waiting for external calls to resolve. Blocks for calls to \pyinline{llm_get_value} are omitted for space. \emph{External call dependency graph} (right): node represent queued external calls, and edges indicate dependencies that block execution; solid border means the call has been dispatched, dashed means it is waiting; ``L'' is a call to \pyinline{llm}, ``P'' is a call to \pyinline{print}.
    }
    \label{fig:motivating:execution}
\end{figure}

Before describing how \toolname{} can be used to parallelize this application, it is important to note that \toolname{} draws a distinction between \emph{external calls}, such as ML models, other remote APIs, file operations, and native code, and \emph{internal code} that is used for orchestration and does not perform any side-effects.
In compound AI applications, external calls tend to be the bottleneck due to the high cost of calling ML models, while internal code is lightweight orchestration.
With this distinction, \toolname{}'s goal is to execute internal code with maximal parallelism while preserving the order of external calls that depend on one another, e.g., the \pyinline{print} calls in Fig.~\ref{fig:motivating:tot}.

To use \toolname{}, the developer annotates functions in their code that should be considered external using \pyinline{@unordered}, \pyinline{@readonly}, and \pyinline{@sequential}; the choice of annotation indicates whether calls to them can be executed early or must execute in the original sequential order. Functions that may block, such as \pyinline{llm}, can be defined with Python's \pyinline{async} machinery to avoid blocking the Python interpreter and enable multiple such calls to execute in parallel.
The developer also annotates code that \toolname{} should execute internally with \pyinline{@poppy}.
In Fig.~\ref{fig:motivating:tot}, \pyinline{llm} calls should run both early and parallel, while \pyinline{print} calls need to execute in the same order as in the sequential program.
Developers can import \toolname{}'s annotations for standard libraries, including functions such as \pyinline{print} and \pyinline{llm}, and data structures such as \pyinline{tuple} and \pyinline{frozenset}.
While application developers can write their own asynchronous external code, we envision it largely being provided by libraries, allowing application code to remain synchronous.

\heading{Parallel Execution}
As an example, consider executing \pyinline{get_values} from Figure~\ref{fig:motivating:tot} with \pyinline{states} set to \pyinline{("a",}\ \pyinline{"a",}\ \pyinline{"b",}\ \pyinline{"b")}. Figure~\ref{fig:motivating:execution} shows the tree of executed code blocks. Initially, execution proceeds as normal: L14, L15, L16, into the first iteration of the loop (for \pyinline{idx = 0}, \pyinline{state = "a"}), L17, and into the else branch of the conditional. L21 is then reached, containing an external call to \pyinline{llm_get_values}. (In the execution tree, this is the top-most path.)

Rather than executing the external call in a blocking fashion, it is \emph{queued}: the result variable, \pyinline{value}, is assigned to a placeholder, and the call is added to a dependency graph of queued external calls, also shown in Figure~\ref{fig:motivating:execution}. Because \pyinline{llm_get_value} was annotated \pyinline{@unordered} and the call arguments are known, the call has no dependencies. Thus it is immediately \emph{dispatched}, beginning execution of the external code in the ordinary Python interpreter. (In this case, a network request is sent to the remote API.)

Execution of internal code then continues, skipping past the outstanding external call. Since L22 doesn't depend on the result (\pyinline{value}), it executes immediately. L23 is another external call, to \pyinline{print}, and so it is queued as before. However, this call does not dispatch immediately, for two reasons: (1) \pyinline{value} is part of its argument, and thus it depends on the previous external call; (2) \pyinline{print} is annotated \pyinline{@sequential}, and so it depends on any preceding sequential calls finishing.

Execution again continues past the outstanding external call, exiting the conditional and moving to L24, which appends \pyinline{value} to \pyinline{values}, and is skipped because it depends on the first outstanding LLM call.
Execution proceeds in the second loop iteration, which now enters the \pyinline{if} branch because \pyinline{value} (\pyinline{"a"}) is in \pyinline{value_cache} (\pyinline{{"a"}}), allowing the next \pyinline{print} to be queued as well.  
After stepping through the final two loop iterations and queueing more external calls, execution is stuck, as all unevaluated statements depend on some outstanding external call. This state, after queueing all external calls but before any have finished, is depicted in Figure~\ref{fig:motivating:execution}.

When an external calls returns (e.g., the remote LLM API returns a response), it \emph{resolves}, replacing its placeholder with the return value. This frees up dependent operations to execute. Suppose the first LLM call resolves, and there are no outstanding preceding sequential external calls. Then, the first \pyinline{print} call (L23) becomes unblocked and executes, as does the operation appending \pyinline{value} to \pyinline{values} (L24).

\heading{Speedups with \toolname{}}
Running the complete ToT application with \toolname{}, the end-to-end execution time reduces from $\ExpToTPyTotalTime{}$ seconds with Python to $\ExpToTOurTotalTime{}$ seconds ($\ExpToTOurSpeedupRatioOfMedians{}\times$ speedup) through the parallelization of the LLM calls.

\section{System Overview}
\label{sec:system}

\begin{figure}[t]
    \centering
    \includegraphics[width=\columnwidth]{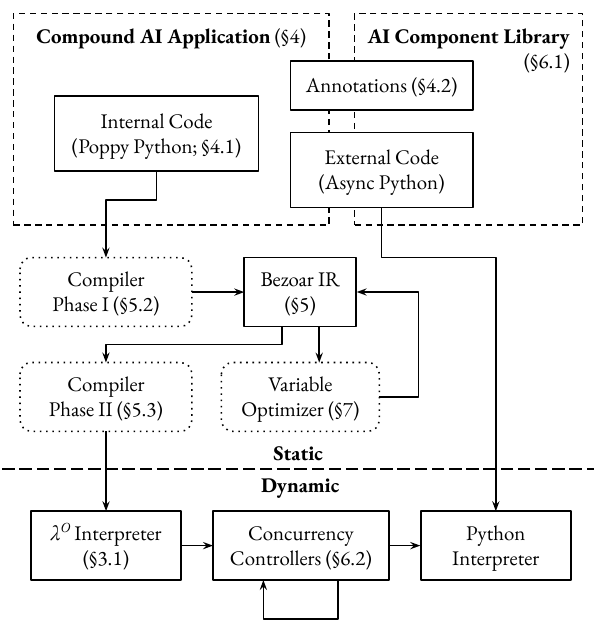}
    \caption{
        The \toolname{} system architecture. Code (static) and processes (dynamic) are shown in boxes; transformations are shown in bubbles.
    }
    \label{fig:arch}
\end{figure}

Fig.~\ref{fig:arch} shows the architecture of \toolname{}, which includes a static component and a dynamic component. The user provides a Python program such as the one shown in Fig.~\ref{fig:motivating:tot}, including both external code (annotated with its concurrency behavior, e.g., \pyinline{@sequential} or \pyinline{@unordered}) and internal code (annotated with \pyinline{@poppy}). 
First, before execution,
the internal code is compiled to an intermediate representation called Bezoar~(\S\ref{sec:compiler}) which is in turn compiled to \opalname{}, a core calculus for extracting external-call parallelism.
A variable optimization pass is applied to the Bezoar code to improve parallelizability (\S\ref{sec:variable-mutation}).
The compiled internal code is executed by the \opalname{} interpreter, while the external code is executed with a standard Python interpreter. Whenever an external call is queued, a concurrency controller (\S\ref{sec:concurrency-control}) is spawned for it to coordinate execution order with other external calls.

A key property of \toolname{} is \emph{soundness}~(\S\ref{sec:guarantees}): assuming the annotations are correct, then applications have the same behavior as if they were executed with the standard Python interpreter. 
It achieves soundness by guaranteeing that external calls are all executed in the order produced by the sequential execution, modulo reorderings that are explicitly allowed by the annotations (e.g., executing \pyinline{@unordered} LLM calls early).
Users have to provide annotations for any external functions they define, but \toolname{} provides annotations for commonly-used parts of the Python standard library.

In the rest of this section, we describe the key challenges to ``opportunistically'' executing Python code, and how the components of \toolname{} address these challenges.

\subsection{Extracting Parallelism with \opalname{}}
\label{ssec:lambdaO-background}
As illustrated in Section~\ref{sec:motivating:with-poppy}, \toolname{} leverages on out-of-order execution to extract parallelism, continuing evaluation past parts of the program that are blocked by outstanding external calls. To do this, \toolname{} relies on \opalname{}~\citep{opal}, a core calculus that crystallizes this idea, called \emph{opportunistic} evaluation.
Execution under opportunistic evaluation is highly nondeterministic, but guarantees \emph{confluence}---i.e., any evaluation order of \opalname{} matches sequential execution.
This allows \opalname{} to take advantage of parallelization opportunities that could not be leveraged statically---in different runs of the same program, external calls may have different durations and return in different orders; \opalname{} will always execute as much as possible, varying at runtime based on what calls have actually returned.

While \opalname{}'s interpreter obtains the desired parallelism, it is a functional, immutable, and non-sequential language, making it very different from Python, which is imperative, mutable, and sequential. Additionally, it assumes the behavior of external calls depends only on their arguments---there can be no hidden dependencies, which is not true of operations like \pyinline{print}. This discrepancy raises several challenges that \toolname{} must address.

\subsection{Challenge: Python Complexity (\S\ref{sec:compiler})}

\opalname{} is able to provide confluence for opportunistic evaluation in a provable manner by being extremely minimal:
like $\lambda$ calculus, its only language constructs are function definitions and function calls.
In contrast, Python is a complex language with many interacting features. While it is in principle possible to directly devise an opportunistic evaluation strategy for Python, it would be quite challenging.

Instead, \toolname{} relies on \opalname{}'s interpreter for the execution of internal code. To apply it to Python, \toolname{} compiles the internal Python code to \opalname{}. However, \opalname{} and Python have conflicting features (immutable vs. mutable, functional vs. imperative, parallel vs. sequential, etc) raising several challenges. To bridge the gap, we split the compiler into two phases, with a novel intermediate representation (IR) between them called Bezoar (\S\ref{sec:compiler}). Like \opalname{}, Bezoar is minimal and explicit, but like Python, it is sequential and has variable mutation. The first phase converts complex features of Python into the minimal Bezoar language (\S\ref{sec:compiler:phase1}); the second phase converts the imperative (sequential, mutable) Bezoar into the functional (opportunistic, immutable) \opalname{} (\S\ref{sec:compiler:phase2}).

\subsection{Challenge: Dynamic Dispatch (\S\ref{sec:concurrency-control})}
To soundly extract parallelism, \toolname{} must execute sequential calls (e.g., \pyinline{print}) in order, while executing unordered calls (e.g., LLMs) early. In \opalname{}, sequential versus unordered behavior is determined at the call site: calls are only sequenced if they have a data dependency. Na\"{i}vely, our compiler could automatically introduce data dependencies to enforce sequencing where necessary; however, this strategy requires the reordering behavior of a call site to be known statically, which is not true for Python, since it supports dynamic dispatch.
For example, the \pyinline{+=} operator (L24) is dynamically dispatched based on the type of the left-hand side: for (immutable) tuples, it can be unordered, but for (mutable) lists, it must be sequential. Sequencing every dynamically dispatched call would be sound, but also removes almost all parallelism opportunities.

Instead, \toolname{} delegates concurrency control to the runtime. When an external call is queued, instead of having the \opalname{} interpreter launch the call directly, the interpreter launches a concurrency controller, a lightweight thread that wraps the external call. This controller knows what function is actually being called, and thus knows the desired concurrency behavior. The compiler encodes the original sequential order of calls into the \opalname{} code, which the concurrency controllers use to coordinate among themselves and determine when it is sound to actually dispatch their external call.

\subsection{Challenge: Variable Scoping and Mutation (\S\ref{sec:variable-mutation})}
\label{ssec:challenge-variable-mutation}

Concurrency and mutation are known to interact poorly. \opalname{} solves this by 
only supporting immutable variables, but
Python code makes heavy use of variable mutation. Because Python allows non-local uses of variables, mutations are hard to analyze and soundness, in the worst case, requires them to be strictly sequenced, preventing parallelization.

To address this challenge, \toolname{} makes the following observation: most variables in practical compound AI applications fall into one of two analyzable categories: (1) variables that are mutated but only used locally, and (2) variables that are used non-locally but only assigned once.
\toolname{} translates the assigned-once variables directly to $\lambda^O$ and implements a variable ``promotion'' pass that turns mutable local variable writes and reads into immutable variables. 
This avoids the need for unnecessary sequencing in most cases, enabling parallelization while retaining soundness.

%% file: sec-technical.tex
\section{System Interface}
\label{sec:interface}

In this section, we describe \toolname{}'s interface: the user provides a Python program, where functions are either annotated as internal (restricted to a fragment of Python, albeit an expressive one; see \S\ref{sec:python-fragment}), 
or external (with the appropriate reordering rule; see \S\ref{sec:annotation-interface}).
We also discuss \toolname{}'s correctness guarantee---it preserves the original sequential semantics modulo reorderings allowed by annotations (\S\ref{sec:guarantees}).

\subsection{Python Fragment for Internal Code}
\label{sec:python-fragment}

For internal (\pyinline{@poppy}) code, \toolname{} supports an expressive subset of Python
including \pyinline{if} statements, \pyinline{for} loops, function definitions and calls, variable assignments, tuples, and operators.
Statements causing non-local control flow, such as \pyinline{break}, \pyinline{continue}, \pyinline{return} (except at the end of a function), and exceptions, via \pyinline{raise} or propagated from an external call, are only currently supported in external and not internal code.
All unsupported cases, except exceptions propagated from external calls, can be detected statically, in which case \toolname{} can fall back to treating such code as external and delegating it to the Python interpreter for sound, albeit not parallel, execution.
\toolname{} wraps all external calls with a general exception handler; if it catches an exception, it terminates and issues an error to the user, making sure to not silently execute unsupported code.
\toolname{}'s restrictions do not limit its ability to handle many real-world compound AI applications, and could also be lifted by extending the compiler with a continuation passing style translation~\citep{appel-compiling-with-cointinuations}, which we leave as future work.

\subsection{Annotations for External Code}
\label{sec:annotation-interface}

Users annotate external functions with Python decorators based on their reordering class:%
\pyinline{@sequential} must execute in the original program order---e.g., writing the filesystem and mutating Python objects;
\pyinline{@readonly} may be reordered among themselves, but must preserve their ordering with respect to sequential calls---e.g., reading the filesystem and reading properties of mutable Python objects;
\pyinline{@unordered} may execute in any order--e.g. stateless external requests and pure operations on immutable datastructures.
If a function is not annotated, it defaults to \pyinline{@sequential}.
External functions can also be marked as \pyinline{async} to indicate that they should execute with Python's async/await machinery; long-running external calls should use this to avoid blocking the Python interpreter. Many common libraries, including the OpenAI SDK, offer asynchronous versions of functions. \toolname{} provides annotations for many standard library functions (Cf.~\S\ref{sec:concurrency-control}).
Note that \toolname{}'s annotations are different from ``parallelization annotations'' in systems like Mozart~\cite{mozart} and PaSh~\cite{pash:eurosys:2021}, which primarily focus on describing whether there exists parallelization opportunities inside a single external call---e.g., by splitting a call's inputs and then merging the outputs (Cf.~\S\ref{sec:related-work}).

\subsection{Correctness Guarantee}
\label{sec:guarantees}

\toolname{} optimizes Python programs without affecting their observable behavior---i.e., the sequence of external calls made is the same as that made when running with standard Python, modulo the reorderings allowed by the annotations.
Formally, we call a sequence of external calls a \emph{trace}. For external call annotations $A$, we define the equivalence relation over traces $\equiv_A$ such that $t_1 \equiv_A t_2$ iff $t_1$ can be transformed into $t_2$ via external call reorderings allowed by $A$.
We define the semantics $\llbracket P \rrbracket_S$ to be the trace produced by $P$ when executed with system $S$. Let $\operatorname{\toolname{}}(A)$ be \toolname{} running with annotations $A$.
\begin{proposition}[Soundness]
For all programs $P$ and annotations $A$, $\llbracket P \rrbracket_{\operatorname{\toolname{}}(A)} \equiv_A \llbracket P \rrbracket_{\operatorname{Python}}$.
\end{proposition}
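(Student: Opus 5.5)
The soundness proposition follows by composing three preservation results along the pipeline Python $\to$ Bezoar $\to$ \opalname{} $\to$ runtime with concurrency controllers. Each stage is shown to preserve the trace up to $\equiv_A$. Since $\equiv_A$ is an equivalence relation, and in particular transitive, the stages chain together. Concretely, I would define an intermediate sequential semantics $\llbracket \cdot \rrbracket_{\mathrm{seq}}$ for Bezoar, in which external calls execute in program order. I would also define a sequential reference semantics for \opalname{} extended with the sequence-number tokens the compiler inserts. Throughout, I assume the annotations in $A$ are correct, meaning that any reordering of two calls permitted by $A$ leaves the return values of all calls unchanged. This assumption is what allows reordered calls to be treated as behaving identically.

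\textbf{Step 1 (phase one).} For every internal program $P$ in the supported fragment of \S\ref{sec:python-fragment}, show that $\llbracket \mathrm{C}_1(P) \rrbracket_{\mathrm{seq}} = \llbracket P \rrbracket_{\operatorname{Python}}$ exactly. This is a standard forward-simulation argument by induction on the syntax: each desugared construct (for-loops, operators via dynamic dispatch, tuples) issues the same external calls with the same arguments in the same order. Programs outside the fragment are delegated wholesale to Python, so the claim holds trivially for them. The variable-promotion pass of \S\ref{sec:variable-mutation} is handled in the same step. Promoting a local mutable variable to immutable SSA-style bindings preserves reads, because the variable is used only locally and there is therefore no aliasing.

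\textbf{Step 2 (phase two).} Show that the compiled \opalname{} term, evaluated sequentially, yields the same trace as the Bezoar program. Each external call must also be tagged with its position in the sequential order, which is the token the controllers consume. This is again a simulation argument, relating a Bezoar store to a \opalname{} environment of immutable bindings.

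\textbf{Step 3 (opportunistic runtime).} Relate arbitrary opportunistic executions to the sequential one. I would invoke the confluence of \opalname{}~\citep{opal}: whatever order of internal reductions and call resolutions occurs, the final value and the multiset of (call, argument) pairs coincide with the sequential evaluation, provided each external call's result depends only on its arguments. Here the controllers restore that hypothesis. I would prove an invariant: a \pyinline{@sequential} call dispatches only after every call with a smaller sequence number that is sequential or readonly has completed, and a \pyinline{@readonly} call dispatches only after all earlier sequential calls have completed. It follows that the dispatched order $t$ differs from the sequential order only by swaps of pairs that $A$ permits, which gives $t \equiv_A \llbracket P \rrbracket_{\mathrm{seq}}$. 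Correct annotations then give each call the same result it would have had sequentially, which closes the loop needed for confluence.

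\textbf{Main obstacle.} I expect Step 3 to be the hardest. There is a circularity: confluence needs call results to be order-independent, and that order-independence follows from the controller invariant, which is itself stated over a run whose calls are still being resolved. I would break the circularity by induction on the number of resolved calls. The inductive hypothesis is that every call resolved so far returned the value it returns in the sequential run. A subtler point is that sequence numbers must be assigned consistently even when control flow depends on unresolved values. Calls on branches not yet decided must not be numbered prematurely, so I would take the tokens to be hierarchical, indexed by the path in the execution tree, ordered lexicographically, and generated only once the enclosing branch is resolved.
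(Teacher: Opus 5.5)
Your proposal is correct and follows the same decomposition as the paper's proof sketch: correctness of the two-phase translation under sequential evaluation, confluence of $\lambda^O$ to relate any opportunistic evaluation order to the sequential one, and a concurrency-control argument that dispatch order differs from sequential order only by $A$-permitted swaps; your explicit annotation-correctness hypothesis and induction on resolved calls make precise a circularity between confluence and hidden external state that the paper's sketch passes over. One correction to your model of the mechanism: the compiler does not insert numeric sequence tokens but threads sequence variables \texttt{S}, which carry the futures $F_R$ and $F_W$, through every call site, \texttt{ite}, and \texttt{fold}, so a call's sequential position is its place on that data-dependency chain, your controller invariant follows by induction along the chain from the forwarding rules, and the hierarchical-token device is unnecessary.
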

\begin{proof}[Proof sketch]
$\lambda^O$'s semantics are \textit{confluent}~\cite{opal}, 
i.e., the order in which internal computation is evaluated does not affect the final result. Thus, any $\lambda^O$ evaluation order will queue the same external calls as sequential $\lambda^O$ evaluation. The translation of Python into $\lambda^O$ via Bezoar introduces data dependencies that faithfully capture the sequential source semantics; thus, sequential execution in \toolname{} produces the same trace as Python execution. The treatment of external calls via the concurrency control object may reorder calls, but only in ways that are $A$-equivalent to sequential \toolname{} execution. Therefore, \toolname{} execution is $A$-equivalent to Python execution.
\end{proof}

\section{Compiler}
\label{sec:compiler}
\toolname{} transforms user-written internal Python code into \opalname{}, two languages with very different characteristics:
(1) \emph{Sequentiality}: Python executes sequentially and has control-flow constructs like \pyinline{if} and \pyinline{for}; \opalname{} executes statements in arbitrary order and only has functions.
(2) \emph{Mutability}: Python variables are mutable whereas \opalname{}'s are immutable.
(3) \emph{Scoping}: Python's variable scoping is implicit, complex, and unusual~\citep{lambdapi};
\opalname{}'s is explicit and straightforward.
(4) \emph{Redundancy}: Python has many language features with overlapping semantics, while \opalname{} is a minimal calculus, for simplify analysis and execution.
This section describes  how \toolname{} bridges these characteristics with its two-phase compiler:
we first describe the Bezoar intermediate representation
and then the two phases, from Python to Bezoar (\S\ref{sec:compiler:phase1}) and Bezoar to \opalname{} (\S\ref{sec:compiler:phase2}).

\heading{Bezoar Intermediate Representation}
\label{sec:compiler:bezoar}
Bezoar sits in between Python and \opalname{}, sharing features with both languages: it retains the core semantic challenges of Python---se\-quen\-tial\-i\-ty and mutability---while making semantics explicit and only supporting minimal program constructs---if statements, for loops, function definitions, and calls. It has both mutable variables (\pyinline{x = ...}; like Python) and immutable variables (\pyinline{r := ...}; like \opalname{}).
Like Python, Bezoar executes sequentially, and has control-flow constructs. However, the sequence of operations is made explicit through the use of A-normal form~\citep{anf}: Bezoar lacks nested expressions (e.g., \pyinline{z := g(f(x))}), and instead expresses each operation as a separate statement (e.g., \pyinline{y := f(x); z := g(y)}) to facilitate encoding statement sequencing as \opalname{} data-dependencies.

\subsection{Compiling \pysubsetname{} to Bezoar}
\label{sec:compiler:phase1}
The first phase can be viewed as consisting of three steps, performed in order.

\heading{Desugaring}
\toolname{} first directly desugars many Python features.
For example, object field access, \pyinline{x.y}, is replaced by \pyinline{getattr(x, "y")} and indexing, \pyinline{x[y]}, by \pyinline{getitem(x, y)}. 
This also includes operators, such as 
\pyinline{x += y},
which becomes \pyinline{x = iadd(x, y)}.

\heading{Variable Scope Elaboration}
Python scoping is implicit, i.e., variable declarations are determined by whether there exists an assignment in a specific scope. 
This leads to subtle scoping semantics, illustrated by the two snippets below that behave differently:

\vspace{2pt}
\begin{minipage}{0.48\columnwidth}
\begin{minted}[linenos,xleftmargin=0.75em,fontsize=\small]{python}
x = "foo"
def f():
    print(x)

foo()
# Prints "foo"
\end{minted}
\end{minipage}
\hfill
\begin{minipage}{0.48\columnwidth}
\begin{minted}[linenos,fontsize=\small]{python}
x = "foo"
def f():
    print(x)
    x = "bar"
foo()
# Raises NameError for x
\end{minted}
\end{minipage}
\vspace{2pt}

\noindent
\toolname{}'s first compilation pass
makes Python's scoping semantics and mutable variables explicit,
with distinct declaration, \texttt{load}, and \texttt{store} constructs,
facilitating later analysis and optimization of variable mutation (\S\ref{sec:variable-mutation}).

\heading{A-Normalization}
\toolname{} makes the order of execution explicit by unfolding nested expressions,
converting each subexpression to a Bezoar statement and binding the result to a new immutable variable.

As an example, the method call {\small \pyinline{z = x.f(y)}} compiles to:
\begin{minted}[linenos,xleftmargin=1.75em,fontsize=\small]{python}
r0 := load x; r1 := getattr(r0, "f");
r2 := load y; r3 := r1(r2); store z r3
\end{minted}

\subsection{Bezoar to $\lambda^O$}
\label{sec:compiler:phase2}

The second phase of \toolname{}'s compiler transforms Bezoar programs to $\lambda^O$ programs---in particular, encoding Bezoar's mutation, sequencing, and control flow into \opalname{}, which is immutable, does not have a sequential execution order, 
and only supports functions. Figure~\ref{fig:motivating:bezoar2opal} shows an example.

\begin{figure}[t]
    \begin{subfigure}{0.39\columnwidth}
        \begin{minted}[linenos,xleftmargin=1.75em,fontsize=\footnotesize]{python}
store x "foo"
r1 := print("bar")
if r0:
    r2 := load x
    r3 := print(r1)

else:
    store x "bar"

        \end{minted}
        \caption{Bezoar code, input}
        \label{fig:motivating:bezoar2opal:input}
    \end{subfigure}
    \begin{subfigure}{0.59\columnwidth}
        \begin{minted}[linenos,xleftmargin=1.75em,fontsize=\footnotesize]{python}
M1 := store(M0, x, "foo")
S1, r1 := print(S0, "bar")
def _then():
    r2 := load(M1, x)
    S2, r3 := print(S1, r2)
    return M1, S2
def _else():
    M2 := store(M1, x, "baz")
    return M2, S1
M3, S3 := ite(r0, _then, _else)
        \end{minted}
        \caption{\opalname{} code, output}
        \label{fig:motivating:bezoar2opal:output}
    \end{subfigure}
    \caption{An example of the second compiler phase. Mutation, sequencing, and control-flow are converted to function calls. In \opalname{}, \pyinline{load}, \pyinline{store}, and \pyinline{ite} are external functions.}
    \label{fig:motivating:bezoar2opal}
\end{figure}

\heading{Variable Mutation}
The first key challenge that needs to be addressed is that $\lambda^O$ does not have a notion of state or mutable variables, while Bezoar and Python do.
To address this challenge, \toolname{} transforms all variable loads and stores to external calls and passes explicit ``memory variables'' \pyinline{M} between them, ensuring that all variable accesses respect sequential execution semantics~\citep{wadler-monads}.
For example, in Fig.~\ref{fig:motivating:bezoar2opal}, \pyinline{store x "foo"} (L1) is compiled to \pyinline{M1 := store(M0, x,}\allowbreak\pyinline{"foo")} (L1). The immutable variables \pyinline{M0} and \pyinline{M1} are dictionaries mapping (mutable) variable names to values, reflecting the program state during execution. The call to \pyinline{store} takes a memory state \pyinline{M0} and returns a new 
memory state, but where the key \pyinline{x} has value \pyinline{"foo"}.

\heading{Sequencing of External Calls}
The second challenge is that \opalname{} has no execution order---i.e., it can run code in parallel as long as data dependencies are respected. Thus, to encode the original, sequential order of external calls, we introduce ``sequence variables'' \pyinline{S}. Similar to memory variables, each external call receives the previous sequence variable as input and returns a new one as output.
In Fig.~\ref{fig:motivating:bezoar2opal}, the two \pyinline{print} calls are sequenced through their dependency via \pyinline{S1}.
\toolname{} also handles
Python \emph{object} mutability, e.g., \pyinline{x.f = y}, by treating Python's \pyinline{getattr} and \pyinline{setattr} functions as external calls and sequencing them in the same way.

\heading{Conditionals}
To address the lack of control flow constructs in $\lambda^O$, \toolname{} ``functionalizes'' control flow statements---i.e., it transforms them into function calls using Church encodings~\citep{churchencoding}.
For \pyinline{if} statements, each branch gets compiled into a function. The two branches, along with the condition, are then passed to a function \pyinline{ite} (Fig.~\ref{fig:motivating:bezoar2opal:output}, L10) that calls the \pyinline{_then} function if the condition is true and the \pyinline{_else} function otherwise.
Note that the \pyinline{M} and \pyinline{S} variables also need to be passed through control flow to account for branches performing different memory accesses or external calls. As an example, the \pyinline{_then} branch in Fig.~\ref{fig:motivating:bezoar2opal:output} performs an external call, while the \pyinline{_else} branch modifies a mutable variable.

\heading{Loops}
Loops are functionalized using a similar transformation:
\pyinline{for} loops are compiled to the list function \pyinline{fold}, which invokes a function once per element of a list, maintaining an accumulator that includes \pyinline{M} and \pyinline{S} and is passed between calls to the function. 
The loop body is compiled into the argument of the \pyinline{fold}, updating the accumulator.
After the loop completes, the final accumulator is used to set the new \pyinline{M} and \pyinline{S} for the rest of the program.
Similarly, \pyinline{while} loops are handled by transforming them into recursive functions.

\section{External Calls and Concurrency Control}
\label{sec:concurrency-control}

\toolname{}'s goal is to execute as many external calls in parallel as possible, while preserving the sequential program semantics. This poses two challenges.
First, it needs to know which external functions can be reordered and run in parallel, and which are dependent. This is obvious when two calls have a data dependency---i.e., the return value of one is the argument of another---but this is not always the case---e.g., \pyinline{print} calls need to be executed sequentially, but this is not visible from their arguments and return values.
Second, Python's dynamic dispatch makes it hard to determine which exact function will be invoked at each call site.
This section describes how \toolname{} addresses these challenges with annotations~(\S\ref{ssec:annotations-and-library}) and dynamic concurrency control~(\S\ref{ssec:dynamic-concurrency-control}).

\subsection{Annotations and Library}
\label{ssec:annotations-and-library}

As mentioned in Section~\ref{sec:annotation-interface}, \toolname{} allows users to decorate external calls with \pyinline{@sequential}, \pyinline{@readonly}, and \pyinline{@unordered} to indicate when they can be run in parallel with other calls.
If an annotation is missing for an external call, \toolname{} considers it to be \pyinline{@sequential}, to guarantee soundness.
Though users can annotate their own external functions, the goal is for external functions to be relegated to libraries and annotated once per library---e.g., by the library developer or crowdsourced---and then imported by users.

\heading{Asynchronous external calls}
Reordering annotations allow \toolname{} to execute parts of the program out of order when their arguments are ready, but external calls are run in a single Python interpreter, so if they take long to complete, they can block execution of other external calls.
To address this issue, long-running external calls that should be executed in parallel should be marked as \pyinline{async} and yield control.
Most external components already offer
asynchronous APIs which can be called directly, so this is not a heavy burden on the developer.

\heading{\toolname{} AI component library}
To assist users with both annotations and asynchronous external calls, \toolname{} provides a library containing (1) \pyinline{async} methods for popular AI components, and (2) annotations for these and other Python standard library calls. 
First, the library contains implementations of several AI components, including LLMs,  text embedding models, and computer vision models, as well as a generic asynchrounous HTTP method that can be used to invoke arbitrary ML models or other stateless remote APIs.
Second, it provides annotations for the above methods and the Python standard library. 
It annotates all 28 unary and binary operators (e.g. \pyinline{+}, \pyinline{==}): if both arguments are immutable, the call is unordered; if one or both is mutable, the call is read-only (because prior mutations to the argument must be allowed to finish). It annotates all 13 in-place operators (e.g. \pyinline{+=}): if both arguments are immutable, it is unordered; if the right-hand side is mutable, it is read-only, and if the left-hand side is mutable, it is sequential. It also annotates all \ExpAnnotationsNumMethods{} methods of core immutable datatypes:\footnote{
\scriptsize
\ExpAnnotationsTypesDisplay{}
}
if all arguments are immutable, the call is unordered; otherwise, it is read-only.

\subsection{Dynamic Concurrency Control}
\label{ssec:dynamic-concurrency-control}

Recall that \opalname{} assumes that there is a data-dependency between external calls that should be run sequentially. However, Python's dynamic dispatch makes it impossible to know statically whether an external call is reorderable. For example, the \pyinline{+=} operator (L24) is dynamically dispatched based on the type of the left-hand side: for (immutable) tuples, it can be unordered, but for (mutable) lists, it must be sequential. %
As a result, the compiler ensures soundness by introducing sequencing variables \pyinline{S} between all adjacent call sites (\S\ref{sec:compiler:phase2}). However, the original \opalname{} interpreter \emph{always} executes external calls with data-dependencies sequentially, so it would serialize \emph{all} external calls.

\heading{Queued external calls}
To address this issue, we introduce a new, \emph{queued} state for external calls, between when the call has been discovered by the \opalname{} interpreter and when all of its dependencies have resolved. As soon as a call is queued, we spawn a \emph{concurrency controller} for it that decides when to dispatch the call, possibly before preceding calls have finished. Because the controller operates at runtime, it knows which external function is actually being called, and thus its reordering annotation. To dispatch sequential and read-only calls, the controller must know when some or all preceding calls have resolved.

\heading{Controller communication}
The inputs and outputs to each external call are available to the controller as futures; input futures can be awaited, and output futures can be fulfilled. Because the sequencing variables \pyinline{S} are passed between adjacent call sites, we use them to communicate two key pieces of information between controllers: (1) a future $F_R$ (akin to a read lock), indicating whether all preceding ``sequential'' calls have resolved; and (2) a future $F_W$ (akin to a write lock), indicating whether all preceding ``sequential'' \emph{and} ``read-only'' calls have resolved.

\heading{Implementation}
Calls that are \pyinline{@sequential} wait for both $F_W$ and $F_R$ to resolve before dispatching, guaranteeing that no other sequential or read-only call is executing while they do.
After the call resolves, the controller fulfills the output futures $F_R'$ and $F_W'$, notifying subsequent external calls that they can begin.
Calls that are \pyinline{@readonly} only wait for $F_R$, allowing them to run without waiting for other read-only calls. Once $F_R$ resolves, $F_R'$ is resolved (forwarding to subsequent calls the fact that prior sequential calls have completed), and then the call is dispatched. After the call resolves, the controller waits for $F_W$. Finally, it resolves $F_W'$ (indicating that this and all previous calls are complete).
Call that are \pyinline{@unordered} do not wait for either future, dispatching immediately and forwarding the input $F_W$, $F_R$ as their output $F_W'$, $F_R'$.
Implementing concurrency control this way, by passing ``locks'' through the sequence variables, is extensible: finer-grained reorderability can be added in the future by passing finer-grained locks.

\section{Optimizing Variable Mutation}
\label{sec:variable-mutation}

While reads of immutable state are always safe to execute concurrently, state mutation is not, and handling concurrent mutations is known to be challenging.
By default, \toolname{} is conservative, matching sequential execution by ordering variable access with the memory variables from Section~\ref{ssec:dynamic-concurrency-control}. While sound, serializing all variable operations in this way blocks almost all of the parallelism we want to extract. Though serialization is necessary in the worst case, we observe that in practice, most variables fall into one of two patterns that we can handle: (1) only assigned once, or (2) only accessed locally. We provide optimizations for each of these cases that avoids serialization and allows \toolname{} to unlock parallelism. Additionally, these optimizations allow variables to be omitted from the global memory states \pyinline{M}, improving performance and lowering overhead.

\heading{Single-assignment variables}
If a mutable variable is only assigned once, then \toolname{} turns it into an immutable variable, statically resolving its read operations without the need for the global memory object. To illustrate the importance of this optimization, note that Python allows library functions such as \pyinline{print} to be reassigned. Na\"{i}vely, this means that all call sites require the function to be loaded from a variable, causing everything to be serialized. However, reassigning library functions is rare in practice, and this optimization avoids serialization in most cases.

\heading{Local variable promotion}
If a variable is only written and accessed in a local scope, \toolname{} uses a variable promotion SSA transformation~\cite{register-promotion} to unfold its loads and stores, avoiding the global memory object.
For straight-line code (no control-flow constructs), this essentially amounts to, for a mutable variable \pyinline{x} with $n$ store operations, splitting it into $n$ immutable variables \pyinline{x1} through \pyinline{xn}, replacing the $i$-th store operation \pyinline{store x r} with \pyinline{xi := r}, and replacing each load operation \pyinline{load r x} with \pyinline{r := xi}, where $i$ is the index of the nearest preceding store operation.
To handle control-flow, we mirror the way the memory (\pyinline{M}) and sequence (\pyinline{S}) variables are passed through the program. For each \pyinline{if} statement and \pyinline{for} loop, we statically determine which variables occur in \pyinline{load} and \pyinline{store} operations. These variables are included alongside the memory and sequence variables, being returned from conditional branches and loop bodies and reassigned by the \pyinline{ite} and \pyinline{fold} call sites.

%% file: sec-evaluation.tex
\section{Evaluation}
\label{sec:evaluation}

We evaluate \toolname{} on a variety of compound AI applications (\S\ref{sec:benchmarks}) drawn from the literature.
We first examine \toolname{}'s overall performance by comparing it to standard Python execution (\S\ref{sec:eval-performance}).
Then, we assess \toolname{}'s runtime overhead and compilation time (\S\ref{sec:eval-overheads}).
Finally, we evaluate whether \toolname{} can scale with increasing parallelism potential (\S\ref{sec:eval-scaling}).

\heading{Implementation}
We have implemented a prototype of \toolname{} in \ExpToolLoC{} lines of Python code.
The implementation comprises \ExpToolLoCCompiler{} LoC for the compiler and \ExpToolLoCRuntime{} LoC for the concurrency controllers and annotations.

\heading{Setup}
All experiments were conducted on a 
machine
with \ExpSetupCPU{}, with \ExpSetupRAM{}. The OS is \ExpSetupOS{} with kernel version \ExpSetupKernel{}, Python version \ExpSetupPython{}, and OpenAI Python SDK version \ExpSetupOpenAI{}. For LLM calls, we used the model in the original benchmark, except where local models were used, which we replaced with OpenAI's gpt-4o-mini.
In all experiments, we execute applications \ExpNumTrials{} times and set temperature to 0 to make LLM calls more deterministic. Because the OpenAI API occasionally hangs for long periods of time, leading to outliers, we report medians across trials.

\subsection{Benchmarks}
\label{sec:benchmarks}

\begin{table}[t]
\centering
\small
\caption{Benchmark program characteristics. \textbf{LoC} is the number of lines, \textbf{For} and \textbf{If} are the number of each construct, \textbf{Dyn} is the number of dynamically dispatched call sites (operators or methods), \textbf{Ext} is the number of external function used, and \textbf{Time} is the ordinary Python execution time in seconds (median across \ExpNumTrials{} trials). For CaMeL, there are \ExpCaMeLNumProgs{} programs, ranges are min--max across tasks.
}
\label{tab:benchmarks_overview}
\input{figures/benchmark-table}
\end{table}

\begin{figure*}[t]
    \centering
    \includegraphics[width=\textwidth]{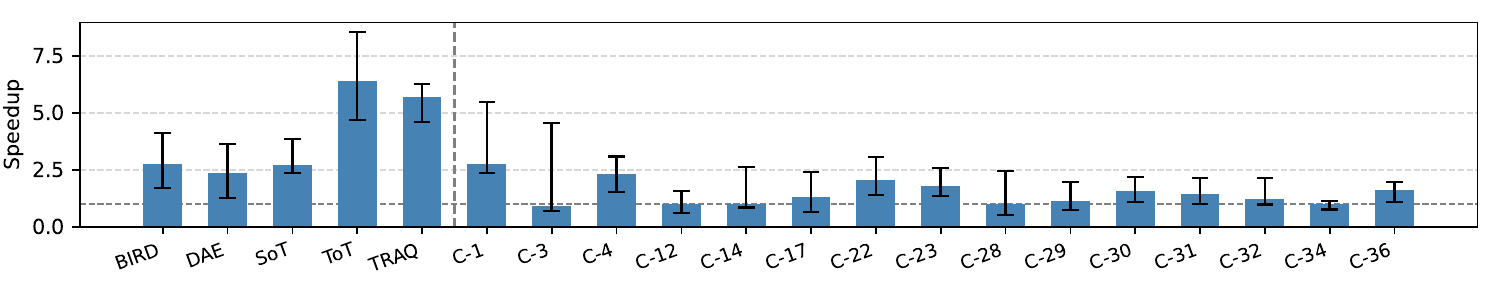}
    \caption{
        Median speedup of \toolname{} execution over Python across \ExpNumTrials{} trials. From CaMeL (C-$n$) we only include applications that make at least one LLM call. Error bars show minimum to maximum speedup across trials.
    }
    \label{fig:performance:plot}
\end{figure*}

We collected a set of compound AI applications and a suite of LLM generated ones from the literature. We used the author's implementation where available with minor adaptations. Tab.~\ref{tab:benchmarks_overview} shows an overview.

\heading{Bayesian Inference from Abduction and Deduction (BIRD)~\citep{bird}}
A probabilistic inference framework that uses an LLM to generate and train a Bayesian network, which can then be used to compute accurate conditional probabilities. 

\heading{Diverse Agent Entropy (DAE)~\citep{dae}}
A multi-agent application for question-answering that has several agents 
debate among themselves to converge on an answer to the original query while also providing uncertainty quantification. 

\heading{Tree of Thoughts (ToT)~\citep{treeofthoughts}}
An inference-time reasoning approach that performs search (e.g., beam search) using an LLM to both expand and score search nodes. The example in Fig.~\ref{fig:motivating:tot} is a simplified version of this implementation.

\heading{Skeleton of Thought (SoT)~\citep{skeletonofthought}}
An answer generation system that decomposes the process into an LLM generating an answer skeleton with multiple holes, followed by the filling of each hole by a separate LLM call. 

\heading{Trustworthy Retrieval Augmented Question Answering (TRAQ)~\citep{traq}}
An uncertainty-quantified retrieval augmented generation approach that
uses a text embedding model to search for documents in a vector store, an LLM to generate multiple responses based on each document, and then a text clustering algorithm to combine related answers.

\heading{Capabilities for Machine Learning (CaMeL)~\citep{camel}}
A collection of \ExpCaMeLNumProgsAll{} programs for AI assistant tasks, generated by CaMeL,
a prompt-injection prevention approach,
for the AgentDojo~\citep{agentdojo} ``workspace'' benchmark:
for each task, CaMeL uses an LLM to generate a Python program (which often itself contains LLM calls) based on user instructions, and then executes the generated program to solve the task. We exclude \ExpCaMeLNumProgsErroneous{} programs that generate runtime errors during ordinary Python execution.

\subsection{Overall Performance}
\label{sec:eval-performance}

Figure~\ref{fig:performance:plot} shows the speedups of \toolname{} over standard Python execution.
Programs take from \ExpPerfTimePyMinMS{}ms to \ExpPerfTimePyMaxS{}s to run with standard Python.
\toolname{} improves execution time for most programs that make LLM calls (geometric mean: $\ExpPerfFactorSomeLLMTaskGeomean{}\times$, min: $\ExpPerfFactorSomeLLMTaskMin{}\times$, max: $\ExpPerfFactorSomeLLMTaskMax{}\times$).
\toolname{} introduces minor overhead for programs that make no LLM calls
(mean overhead: $\ExpPerfSlowdownAbsoluteNoLLMTaskMin{}$ ms, max overhead: $\ExpPerfSlowdownAbsoluteNoLLMTaskMax{}$ ms)

\heading{Discussion}
\toolname{} can successfully improve the end-to-end execution time of all applications that involve multiple independent LLM calls.
For ToT (the example shown in Section~\ref{sec:motivating:tot}), \toolname{} manages to successfully uncover parallelization opportunities in both the search state expansion loop (Figure~\ref{fig:motivating:tot}, L6) and the state valuation loop (L16). This is despite the complex inter-loop control-flow dependency (\pyinline{value_cache}).
As another interesting example, CaMeL-36 asks the assistant to search through a drive for a file containing vacation plans, and then based on that file, to (1) describe what they will be doing on June 13, and (2) create a new ``packing list'' file. CaMeL solves this by generating a program that: loops over each file in the drive and queries an LLM for whether it is a vacation plan file; asks an LLM, based on the file, what will be happening on June 13; asks an LLM, based on the file, to generate a packing list; and writes the packing list to a new file. \toolname{} is able to parallelize both the LLM calls that determine whether each file describes a vacation plan, as well as the two LLM calls that generate the June 13 description and the packing list.
Some of the tasks are not parallelizable---e.g., CaMeL-28 makes a single LLM call to extract feedback scores from a document.
For such programs, \toolname{} introduces overhead through its runtime, but since the execution time is dominated by the LLM call, the overhead is minimal (for CaMeL-28, indistinguishable from execution time variance).

\heading{Detailed ToT Execution}
\begin{figure}[b]
    \centering
    \includegraphics[width=\columnwidth]{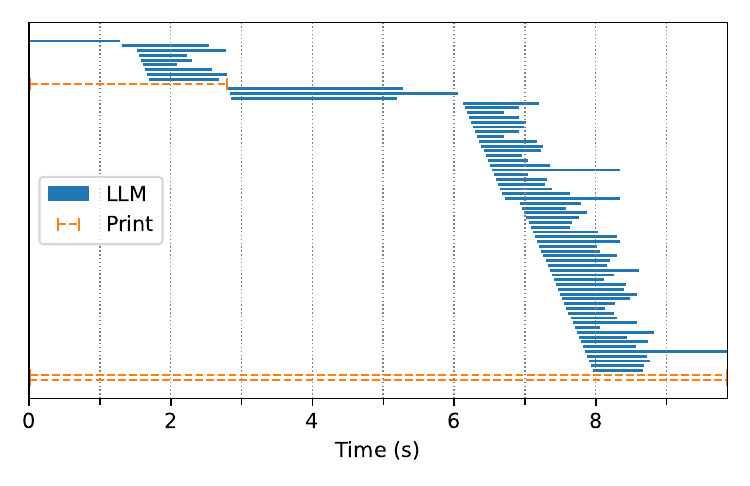}
    \caption{
        A single execution trace of ToT (with 2 steps of search and beam size 3), showing selected external calls. Dashed lines indicate the time between queueing and dispatch; solid lines indicate the time between dispatch and resolution. (LLM calls dispatch immediately; print calls resolve immediately). Calls are sorted from top to bottom by the order in which they would be executed under sequential execution.
    }
    \label{fig:ext-trace}
\end{figure}
To better understand the performance of \toolname{}, we zoom in on the precise timeline of external calls in a ToT execution with 2 steps of search and a beam size of 3 (Fig.~\ref{fig:ext-trace}).
The execution goes through four distinct phases.
The first LLM call corresponds to \pyinline{llm_get_}\allowbreak\pyinline{proposals} for the initial search state. Execution blocks until it has completed. Then, many calls to \pyinline{llm_get_value} execute in parallel. Once they have all completed, a \pyinline{print} call executes, which was queued at the beginning, but had to wait for its argument (which is based on the LLM results) before it could execute. That concludes the first step of search. The process repeats, executing \pyinline{llm_get_proposals} calls, but this time 3 in parallel (because of the beam size), and then many \pyinline{llm_get_value} calls in parallel. A second \pyinline{print} is then able to execute, concluding the second step of search. The whole program concludes with a final \pyinline{print} call.

\subsection{Overheads}
\label{sec:eval-overheads}

\begin{figure}[t]
    \centering
    \includegraphics[width=\columnwidth]{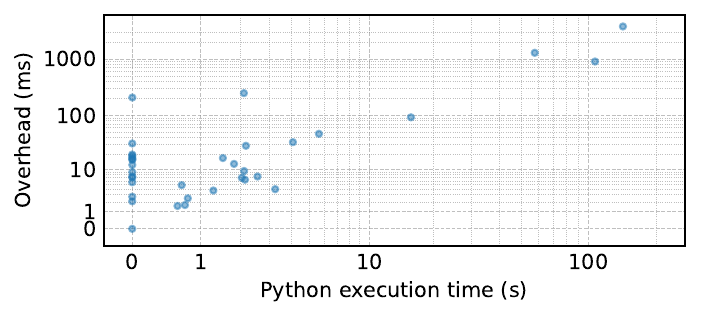}
    \caption{Absolute execution time overhead of \toolname{} vs the Python execution time, for each benchmark (median over \ExpScalingNumTrials{} trials). 
    Overhead is the time spent inside the \opalname{} interpreter, with all external calls annotated as sequential.
    }
    \label{fig:overhead:scatters}
\end{figure}

In this section, we evaluate \toolname{}'s compilation time and its execution overhead when there is no parallelization.

\heading{Compilation}
\toolname{} compilation time for all applications ranges from \ExpCompilationTimeTaskMin{}ms to \ExpCompilationTimeTaskMax{}ms, making it fast enough to be used in the critical path when running such programs.

\heading{Execution time}
To evaluate the overhead of \toolname{}'s interpreter and runtime, we measure the execution time of all applications using \toolname{} with all external calls annotated as sequential. We instrument the system to track how much execution time is spent inside the \opalname{} interpreter.
Figure~\ref{fig:overhead:scatters} shows this interpreter overhead for each benchmark. For programs with no LLM calls (Python running time $\approx 0$s), the absolute overhead is between $\ExpRuntimeOverheadAbsoluteNoLLMMin{}$ms and $\ExpRuntimeOverheadAbsoluteNoLLMMax{}$ms. For those with LLM calls, the absolute overhead is between $\ExpRuntimeOverheadAbsoluteLLMMin{}$ms and $\ExpRuntimeOverheadAbsoluteLLMMaxS{}$s, and the relative overhead between  $\ExpRuntimeOverheadRelativeLLMMin{}$\% and $\ExpRuntimeOverheadRelativeLLMMax{}\%$.

\subsection{Scaling}
\label{sec:eval-scaling}

\begin{figure}[t]
    \centering
    \includegraphics[width=\columnwidth]{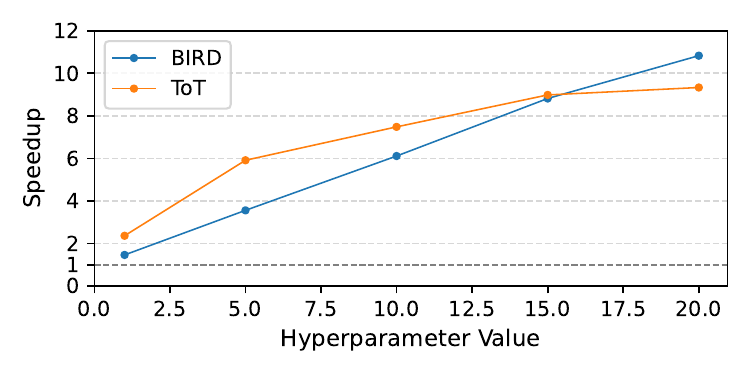}
    \caption{Speedup of \toolname{} over Python 
    (median over \ExpScalingNumTrials{} trials)
    as a function of a selected hyperparameter in each benchmark.
    }
    \label{fig:scaling:line}
\end{figure}

To evaluate whether \toolname{} can scale with more parallelization opportunities, we execute two programs (BIRD and ToT) that have a configurable hyperparameter affecting how many LLM calls are made.
For BIRD, we vary 
the number of times an LLM is called to assess a piece of evidence (originally set to 3), and for ToT, we vary \pyinline{BEAM_WIDTH} (Fig.~\ref{fig:motivating:tot} L9; originally set to 5), from 1 to 20.
Figure~\ref{fig:scaling:line} shows the results.
For BIRD the speedup over Python ranges from \ExpScalingBIRDRelativeSpeedupMin{} to \ExpScalingBIRDRelativeSpeedupMax{}, and for ToT, it ranges from \ExpScalingToTRelativeSpeedupMin{} to \ExpScalingToTRelativeSpeedupMax{}.

\heading{Discussion}
These results show that \toolname{} is able to leverage increasing parallelization potential in applications.
Note that the scaling is not linear because the parameters do not linearly lead to more parallelization, but just increase some LLM calls in the model---e.g., BIRD has separate, parallelizable LLM calls for generating additional information about the task.
Furthermore, note that for ToT speedup is above 1 even for \pyinline{BEAM_WIDTH = 1} as the program has additional parallelism: \pyinline{llm_get_proposals} (Fig.~\ref{fig:motivating:tot}, L7) returns a list of states, which \pyinline{get_values} can loop over in parallel.

%% file: figures/benchmark-table.tex
\begin{tabular}{@{}lcrrrrrr@{}}
    \toprule
    \multicolumn{2}{l}{\textbf{Benchmark}} & \textbf{LoC} & \textbf{For} & \textbf{If} & \textbf{Dyn} & \textbf{Ext} & \textbf{Time} \\ 
    \midrule
    BIRD & \citep{bird} & 435 & 10 & 12 & 37 & 10 & $107$ \\
    DAE & \citep{dae} & 260 & 7 & 12 & 101 & 8 & $58$ \\
    ToT & \citep{treeofthoughts} & 169 & 6 & 5 & 21 & 18 & $142$ \\
    SoT & \citep{skeletonofthought} & 100 & 3 & 3 & 11 & 8 & $6$ \\
    TRAQ & \citep{traq} & 210 & 3 & 2 & 5 & 9 & $16$ \\
    \midrule
    CaMeL (30) & \citep{camel} & 2--114 & 0--7 & 0--17 & 0--53 & 0--8 & 0--4 \\
    \bottomrule
\end{tabular}

%% file: sec-related-work.tex
\section{Related Work}
\label{sec:related-work}

There are three main lines of related work, each of which has a limitation compared to \toolname{}:
(1) There has been work prior work on parallelization of programs with external calls; however, none of them can support imperative code or Python in particular.
(2) There has been prior work on specialized systems for optimizing compound AI applications; while some of them can exploit some degree of parallelization, they either have limited expressiveness or require manual parallelization.
(3) There has been prior work on Python optimization; however, none of them can extract parallelism across external calls. Below, we expand on these and other lines of related work.

\heading{Automatic parallelization of general-purpose languages}
There is an enormous literature on automatic parallelization~\citep{Karp1966,Adams1968,Karp1969,Adams1969,Rodriguez1969,dennis1974dataflow,davis1982data,mcgraw1982val,sisal,arvind1992id,maessen1995semantics,multilisp,dataflow-history,apostolakis2020perspective} of general-purpose languages.
While most of this existing work does not support extracting parallelism from
programs bottlenecked at external calls, there has been recent interest in doing so~\cite{posh,mozart,dish,ignis:pldi:2019}.
Most relatedly, PaSh~\cite{pash:eurosys:2021,pash-jit-osdi} focuses on shell scripts, extracting parallelism across black-box external components; however, PaSh does not provide solutions for the challenges of Python code addressed by \toolname{} (e.g., function definitions, scoping, variable mutation, and dynamic dispatch).
Other systems in this space have different goals from \toolname{}, though they also use external call annotations to specify key properties such as:
(1) providing distribution and offloading hints (e.g., in POSH~\cite{posh} and Ignis~\cite{ignis:pldi:2019}),
(2) describing the inputs and outputs of an external call (e.g., in PaSh~\cite{pash:eurosys:2021} and DiSh~\cite{dish} that focus on shell commands whose inputs and outputs cannot be directly identified), 
or (3) describing how to split the inputs of an external call to enable data parallelism (e.g., in Mozart~\cite{mozart} and PaSh~\cite{pash:eurosys:2021}).
Drawing inspiration from these systems, \toolname{} also uses annotations to specify important properties of the external code, though theirs describe the filesystem accesses and shardability of functions and commands, whereas ours describe how function reorderability varies based on call arguments.
Finally, \opalname{}~\citep{opal} is a core calculus for automatically extracting external-call parallelism; however, it does not operate on real-world programs, and does not handle issues like complex scoping, variable mutation, and dynamic dispatch. \toolname{} builds on this work by bridging the gap between \opalname{} and real-world Python code.

\heading{Optimizations for compound AI applications}
There is a significant body of work on frameworks for writing and optimizing compound AI applications~\cite{dspy,sglang,murakkab,langchain,n8n, alto};
however, all of these systems assume that the compound AI application is written in a domain-specific language that is not as expressive as general-purpose Python. For example, LangChain~\citep{langchain}, n8n~\citep{n8n}, DataFlow~\citep{dataflow-llm}, and Murakkab~\citep{murakkab} all support restricted workflow descriptions without control flow, and are unable to support applications such as Tree-of-Thoughts.
SGLang~\citep{sglang}
offers a DSL for expressing certain kinds of compound AI applications, but parallelism has to be handled manually and it has no support for other kinds of external calls, such as embedding models.

\heading{Python optimization}
Recent work improves the performance of Python programs by transforming and optimizing bottleneck fragments in them~\cite{tuplex,hipy,codon,weld,pytorch2,dias,jax,numba,appy}.
Their key focus is to identify small bottleneck fragments of the internal Python computation (e.g., UDFs that are used as part of a data analytics application~\cite{hipy,tuplex} or tensor computations~\cite{pytorch2,jax}), understand their semantics and transform them into a different representation (e.g., LLVM or XLA), and finally apply aggressive optimizations to them.
Furthermore, Python has long been supported by a general purpose JIT compiler ecosystem, including PyPy~\cite{pypy} and more recently CPython's JIT compiler~\cite{cpython-jit}.
All of this work identifies that Python is hard to precisely analyze ahead-of-time, so they combine static analysis passes with runtime components that leverage runtime execution information to make analysis more precise.
\toolname{} follows the same hybrid approach, combining an ahead-of-time component with a runtime component to deal with Python's dynamic features such as dynamic dispatch and mutation.
At the same time, \toolname{} has a very different focus from these systems and therefore also addresses different challenges. Namely, all these systems try to optimize applications where the bottleneck is code written in Python. Because of this, they often make restrictive assumptions about what code can be supported by their systems, since the goal is to apply aggressive optimizations to Python internal code.
In contrast, \toolname{} focuses on applications where the bottleneck is in external components, and the internal Python code is used to orchestrate the application.
The key goal of \toolname{} is to extract parallelization opportunities by analyzing the orchestration code, addressing challenges related to discovering and preserving dependencies across external calls in the context of dynamic dispatch and making sure that the Python code does not force unnecessary sequentialization in the context of mutation.%

\heading{Parallelizable AI Agents}
One line of work for reducing AI agent latency is to design AI agents that think and act in more parallelizable ways. For instance, Skeleton-of-Thought~\citep{skeletonofthought} is an agent decoding approach that is intended to make decoding faster by decomposing generation into a skeleton generation step, followed by parallel generations to fill holes in the skeleton. However, their \emph{implementation} does not actually execute in parallel (instead executing sequentially and estimating the speedup by adding the skeleton generation time to the longest hole generation time).  We obtain the desired speedup on their implementation, demonstrating the value of our approach. APAR~\citep{apar} and PASTA~\citep{pasta} are similar approaches, allowing models to generate special tokens which fork the generation process, leading to parallel generation. Our work is complimentary to these, as it helps exploit the parallelization opportunities that these designs provide.

%% file: sec-conclusion.tex
\section{Conclusion}

We have proposed \toolname{}, a system focused on automatically parallelizing compound AI applications written in Python. Our key insight is that compound AI applications are bottlenecked by calls to external code such as ML models; as a consequence, effective parallelization relies on identifying external calls that can be run in parallel. \toolname{} solves a number of challenges to surfacing these opportunities in Python code, including the complexity of Python, dynamic dispatch, and variable mutation. Our experiments demonstrate that \toolname{} reduces execution time by up to $\ExpPerfFactorManyLLMTaskMax{}\times$ 
compared to standard Python execution while preserving the sequential program semantics. More broadly, we believe that \toolname{} provides an ideal balance between expressiveness and simplicity, making it easy to implement additional features for supporting compound AI applications.

%% file: main.bbl

\begin{thebibliography}{78}


\ifx \showCODEN    \undefined \def \showCODEN     #1{\unskip}     \fi
\ifx \showDOI      \undefined \def \showDOI       #1{#1}\fi
\ifx \showISBNx    \undefined \def \showISBNx     #1{\unskip}     \fi
\ifx \showISBNxiii \undefined \def \showISBNxiii  #1{\unskip}     \fi
\ifx \showISSN     \undefined \def \showISSN      #1{\unskip}     \fi
\ifx \showLCCN     \undefined \def \showLCCN      #1{\unskip}     \fi
\ifx \shownote     \undefined \def \shownote      #1{#1}          \fi
\ifx \showarticletitle \undefined \def \showarticletitle #1{#1}   \fi
\ifx \showURL      \undefined \def \showURL       {\relax}        \fi
\providecommand\bibfield[2]{#2}
\providecommand\bibinfo[2]{#2}
\providecommand\natexlab[1]{#1}
\providecommand\showeprint[2][]{arXiv:#2}

\bibitem[tre(2023)]%
        {treeofthoughts-impl}
 \bibinfo{year}{2023}\natexlab{}.
\newblock \bibinfo{booktitle}{\emph{{Official Repo of Tree of Thoughts}}}.
\newblock
\urldef\tempurl%
\url{https://github.com/princeton-nlp/tree-of-thought-llm}
\showURL{%
\tempurl}


\bibitem[Adams(1968)]%
        {Adams1968}
\bibfield{author}{\bibinfo{person}{Duane~A. Adams}.} \bibinfo{year}{1968}\natexlab{}.
\newblock \bibinfo{booktitle}{\emph{{A Computation Model with Data-Sequenced Control}}}.
\newblock \bibinfo{type}{{T}echnical {R}eport}. \bibinfo{institution}{Stanford University}.
\newblock
\newblock
\shownote{Technical Report CGTM 45}.


\bibitem[Adams(1969)]%
        {Adams1969}
\bibfield{author}{\bibinfo{person}{Duane~A. Adams}.} \bibinfo{year}{1969}\natexlab{}.
\newblock \emph{\bibinfo{title}{{A Computation Model with Data Flow Sequencing}}}.
\newblock \bibinfo{thesistype}{Ph.\,D. Dissertation}.
\newblock


\bibitem[Ansel et~al\mbox{.}(2024)]%
        {pytorch2}
\bibfield{author}{\bibinfo{person}{Jason Ansel}, \bibinfo{person}{Edward Yang}, \bibinfo{person}{Horace He}, \bibinfo{person}{Natalia Gimelshein}, \bibinfo{person}{Animesh Jain}, \bibinfo{person}{Michael Voznesensky}, \bibinfo{person}{Bin Bao}, \bibinfo{person}{Peter Bell}, \bibinfo{person}{David Berard}, \bibinfo{person}{Evgeni Burovski}, {et~al\mbox{.}}} \bibinfo{year}{2024}\natexlab{}.
\newblock \showarticletitle{Pytorch 2: Faster machine learning through dynamic python bytecode transformation and graph compilation}. In \bibinfo{booktitle}{\emph{Proceedings of the 29th ACM international conference on architectural support for programming languages and operating systems, volume 2}}. \bibinfo{pages}{929--947}.
\newblock


\bibitem[{Anthropic}(2024)]%
        {claude}
\bibfield{author}{\bibinfo{person}{{Anthropic}}.} \bibinfo{year}{2024}\natexlab{}.
\newblock \bibinfo{title}{The {Claude} 3 Model Family: Opus, Sonnet, Haiku}.
\newblock \bibinfo{howpublished}{\url{https://www-cdn.anthropic.com/de8ba9b01c9ab7cbabf5c33b80b7bbc618857627/Model_Card_Claude_3.pdf}}.
\newblock


\bibitem[{Anthropic}(2025)]%
        {anthropic2025multiagent}
\bibfield{author}{\bibinfo{person}{{Anthropic}}.} \bibinfo{year}{2025}\natexlab{}.
\newblock \bibinfo{title}{How we built our multi-agent research system}.
\newblock \bibinfo{howpublished}{\url{https://www.anthropic.com/engineering/multi-agent-research-system}}.
\newblock
\newblock
\shownote{Accessed: 2026-04-01}.


\bibitem[Apostolakis et~al\mbox{.}(2020)]%
        {apostolakis2020perspective}
\bibfield{author}{\bibinfo{person}{Sotiris Apostolakis}, \bibinfo{person}{Ziyang Xu}, \bibinfo{person}{Greg Chan}, \bibinfo{person}{Simone Campanoni}, {and} \bibinfo{person}{David~I August}.} \bibinfo{year}{2020}\natexlab{}.
\newblock \showarticletitle{Perspective: A sensible approach to speculative automatic parallelization}. In \bibinfo{booktitle}{\emph{Proceedings of the Twenty-Fifth International Conference on Architectural Support for Programming Languages and Operating Systems}}. \bibinfo{pages}{351--367}.
\newblock


\bibitem[Appel(1991)]%
        {appel-compiling-with-cointinuations}
\bibfield{author}{\bibinfo{person}{Andrew~W. Appel}.} \bibinfo{year}{1991}\natexlab{}.
\newblock \bibinfo{booktitle}{\emph{Compiling with Continuations}}.
\newblock \bibinfo{publisher}{Cambridge University Press}.
\newblock


\bibitem[Arvind(1992)]%
        {arvind1992id}
\bibfield{author}{\bibinfo{person}{Rishiyur S~Nikhil Arvind}.} \bibinfo{year}{1992}\natexlab{}.
\newblock \showarticletitle{Id: a language with implicit parallelism}.
\newblock In \bibinfo{booktitle}{\emph{A Comparative Study of Parallel Programming Languages}}. \bibinfo{publisher}{Elsevier}, \bibinfo{pages}{169--215}.
\newblock


\bibitem[Baziotis et~al\mbox{.}(2024)]%
        {dias}
\bibfield{author}{\bibinfo{person}{Stefanos Baziotis}, \bibinfo{person}{Daniel Kang}, {and} \bibinfo{person}{Charith Mendis}.} \bibinfo{year}{2024}\natexlab{}.
\newblock \showarticletitle{Dias: Dynamic rewriting of Pandas code}.
\newblock \bibinfo{journal}{\emph{Proceedings of the ACM on Management of Data}} \bibinfo{volume}{2}, \bibinfo{number}{1} (\bibinfo{year}{2024}), \bibinfo{pages}{1--27}.
\newblock


\bibitem[Beurer-Kellner et~al\mbox{.}(2023)]%
        {lmql}
\bibfield{author}{\bibinfo{person}{Luca Beurer-Kellner}, \bibinfo{person}{Marc Fischer}, {and} \bibinfo{person}{Martin~T. Vechev}.} \bibinfo{year}{2023}\natexlab{}.
\newblock \showarticletitle{Prompting Is Programming: A Query Language for Large Language Models}. In \bibinfo{booktitle}{\emph{Proceedings of the 44th ACM SIGPLAN International Conference on Programming Language Design and Implementation}}. \bibinfo{publisher}{ACM}, \bibinfo{pages}{1946--1969}.
\newblock
\urldef\tempurl%
\url{https://doi.org/10.1145/3591300}
\showDOI{\tempurl}


\bibitem[Bolz et~al\mbox{.}(2009)]%
        {pypy}
\bibfield{author}{\bibinfo{person}{Carl~Friedrich Bolz}, \bibinfo{person}{Antonio Cuni}, \bibinfo{person}{Maciej Fijalkowski}, {and} \bibinfo{person}{Armin Rigo}.} \bibinfo{year}{2009}\natexlab{}.
\newblock \showarticletitle{Tracing the Meta-Level: PyPy's Tracing JIT Compiler}. In \bibinfo{booktitle}{\emph{Proceedings of the 4th Workshop on the Implementation, Compilation, Optimization of Object-Oriented Languages and Programming Systems (ICOOOLPS)}}. \bibinfo{publisher}{ACM}, \bibinfo{pages}{18--25}.
\newblock
\urldef\tempurl%
\url{https://doi.org/10.1145/1565824.1565827}
\showDOI{\tempurl}


\bibitem[Bradbury et~al\mbox{.}(2021)]%
        {jax}
\bibfield{author}{\bibinfo{person}{James Bradbury}, \bibinfo{person}{Roy Frostig}, \bibinfo{person}{Peter Hawkins}, \bibinfo{person}{Matthew~James Johnson}, \bibinfo{person}{Chris Leary}, \bibinfo{person}{Dougal Maclaurin}, \bibinfo{person}{George Necula}, \bibinfo{person}{Adam Paszke}, \bibinfo{person}{Jake VanderPlas}, \bibinfo{person}{Skye Wanderman-Milne}, {et~al\mbox{.}}} \bibinfo{year}{2021}\natexlab{}.
\newblock \showarticletitle{Jax: Autograd and xla}.
\newblock \bibinfo{journal}{\emph{Astrophysics Source Code Library}} (\bibinfo{year}{2021}), \bibinfo{pages}{ascl--2111}.
\newblock


\bibitem[Bucher and Ostrowski(2024)]%
        {cpython-jit}
\bibfield{author}{\bibinfo{person}{Brandt Bucher} {and} \bibinfo{person}{Savannah Ostrowski}.} \bibinfo{year}{2024}\natexlab{}.
\newblock \bibinfo{title}{{PEP 744: JIT Compilation}}.
\newblock \bibinfo{howpublished}{\url{https://peps.python.org/pep-0744/}}.
\newblock
\newblock
\shownote{Python Enhancement Proposal, Draft status}.


\bibitem[Chase(2023)]%
        {langchain}
\bibfield{author}{\bibinfo{person}{Harrison Chase}.} \bibinfo{year}{2023}\natexlab{}.
\newblock \bibinfo{title}{LangChain}.
\newblock \bibinfo{howpublished}{\url{https://github.com/langchain-ai/langchain}}.
\newblock


\bibitem[Chaudhry et~al\mbox{.}(2025)]%
        {murakkab}
\bibfield{author}{\bibinfo{person}{Gohar~Irfan Chaudhry}, \bibinfo{person}{Esha Choukse}, \bibinfo{person}{\'{I}\~{n}igo Goiri}, \bibinfo{person}{Rodrigo Fonseca}, \bibinfo{person}{Adam Belay}, {and} \bibinfo{person}{Ricardo Bianchini}.} \bibinfo{year}{2025}\natexlab{}.
\newblock \showarticletitle{Towards Resource-Efficient Compound AI Systems}. In \bibinfo{booktitle}{\emph{Proceedings of the 2025 Workshop on Hot Topics in Operating Systems}} (Banff, AB, Canada) \emph{(\bibinfo{series}{HotOS '25})}. \bibinfo{publisher}{Association for Computing Machinery}, \bibinfo{address}{New York, NY, USA}, \bibinfo{pages}{218–224}.
\newblock
\showISBNx{9798400714757}
\urldef\tempurl%
\url{https://doi.org/10.1145/3713082.3730377}
\showDOI{\tempurl}


\bibitem[Church(1941)]%
        {churchencoding}
\bibfield{author}{\bibinfo{person}{Alonzo Church}.} \bibinfo{year}{1941}\natexlab{}.
\newblock \bibinfo{booktitle}{\emph{The Calculi of Lambda-Conversion}}. \bibinfo{series}{Annals of Mathematics Studies}, Vol.~\bibinfo{volume}{6}.
\newblock \bibinfo{publisher}{Princeton University Press}.
\newblock


\bibitem[Dao et~al\mbox{.}(2022)]%
        {flashattention}
\bibfield{author}{\bibinfo{person}{Tri Dao}, \bibinfo{person}{Daniel~Y. Fu}, \bibinfo{person}{Stefano Ermon}, \bibinfo{person}{Atri Rudra}, {and} \bibinfo{person}{Christopher R{\'e}}.} \bibinfo{year}{2022}\natexlab{}.
\newblock \showarticletitle{{FlashAttention}: Fast and Memory-Efficient Exact Attention with {IO}-Awareness}. In \bibinfo{booktitle}{\emph{Advances in Neural Information Processing Systems}}, Vol.~\bibinfo{volume}{35}. \bibinfo{pages}{16344--16359}.
\newblock


\bibitem[Davis and Keller(1982)]%
        {davis1982data}
\bibfield{author}{\bibinfo{person}{Alan~L. Davis} {and} \bibinfo{person}{Robert~M. Keller}.} \bibinfo{year}{1982}\natexlab{}.
\newblock \showarticletitle{Data Flow Program Graphs}.
\newblock \bibinfo{journal}{\emph{Computer}} \bibinfo{volume}{15}, \bibinfo{number}{02} (\bibinfo{date}{2} \bibinfo{year}{1982}), \bibinfo{pages}{26--41}.
\newblock


\bibitem[Debenedetti et~al\mbox{.}(2026)]%
        {camel}
\bibfield{author}{\bibinfo{person}{Edoardo Debenedetti}, \bibinfo{person}{Ilia Shumailov}, \bibinfo{person}{Tianqi Fan}, \bibinfo{person}{Jamie Hayes}, \bibinfo{person}{Nicholas Carlini}, \bibinfo{person}{Daniel Fabian}, \bibinfo{person}{Christoph Kern}, \bibinfo{person}{Chongyang Shi}, \bibinfo{person}{Andreas Terzis}, {and} \bibinfo{person}{Florian Tram{\`e}r}.} \bibinfo{year}{2026}\natexlab{}.
\newblock \showarticletitle{Defeating Prompt Injections by Design}. \bibinfo{howpublished}{arXiv preprint arXiv:2503.18813}. In \bibinfo{booktitle}{\emph{IEEE Conference on Secure and Trustworthy Machine Learning (SaTML)}}.
\newblock
\urldef\tempurl%
\url{https://arxiv. org/abs/2503.18813}
\showURL{%
\tempurl}


\bibitem[Debenedetti et~al\mbox{.}(2024)]%
        {agentdojo}
\bibfield{author}{\bibinfo{person}{Edoardo Debenedetti}, \bibinfo{person}{Jie Zhang}, \bibinfo{person}{Mislav Balunovic}, \bibinfo{person}{Luca Beurer-Kellner}, \bibinfo{person}{Marc Fischer}, {and} \bibinfo{person}{Florian Tram{\`e}r}.} \bibinfo{year}{2024}\natexlab{}.
\newblock \showarticletitle{Agentdojo: A dynamic environment to evaluate prompt injection attacks and defenses for llm agents}.
\newblock \bibinfo{journal}{\emph{Advances in Neural Information Processing Systems}}  \bibinfo{volume}{37} (\bibinfo{year}{2024}), \bibinfo{pages}{82895--82920}.
\newblock


\bibitem[Dennis(1974)]%
        {dennis1974dataflow}
\bibfield{author}{\bibinfo{person}{Jack~B. Dennis}.} \bibinfo{year}{1974}\natexlab{}.
\newblock \showarticletitle{First version of a data flow procedure language}. In \bibinfo{booktitle}{\emph{Programming Symposium}}, \bibfield{editor}{\bibinfo{person}{B.~Robinet}} (Ed.). \bibinfo{publisher}{Springer Berlin Heidelberg}, \bibinfo{address}{Berlin, Heidelberg}, \bibinfo{pages}{362--376}.
\newblock
\showISBNx{978-3-540-37819-8}


\bibitem[Feng et~al\mbox{.}(2025a)]%
        {dae}
\bibfield{author}{\bibinfo{person}{Yu Feng}, \bibinfo{person}{Phu~Mon Htut}, \bibinfo{person}{Zheng Qi}, \bibinfo{person}{Wei Xiao}, \bibinfo{person}{Manuel Mager}, \bibinfo{person}{Nikolaos Pappas}, \bibinfo{person}{Kishaloy Halder}, \bibinfo{person}{Yang Li}, \bibinfo{person}{Yassine Benajiba}, {and} \bibinfo{person}{Dan Roth}.} \bibinfo{year}{2025}\natexlab{a}.
\newblock \showarticletitle{Rethinking {LLM} Uncertainty: A Multi-Agent Approach to Estimating Black-Box Model Uncertainty}. In \bibinfo{booktitle}{\emph{Findings of the Association for Computational Linguistics: EMNLP 2025}}, \bibfield{editor}{\bibinfo{person}{Christos Christodoulopoulos}, \bibinfo{person}{Tanmoy Chakraborty}, \bibinfo{person}{Carolyn Rose}, {and} \bibinfo{person}{Violet Peng}} (Eds.). \bibinfo{publisher}{Association for Computational Linguistics}, \bibinfo{address}{Suzhou, China}, \bibinfo{pages}{12349--12375}.
\newblock
\showISBNx{979-8-89176-335-7}
\urldef\tempurl%
\url{https://doi.org/10.18653/v1/2025.findings-emnlp.660}
\showDOI{\tempurl}


\bibitem[Feng et~al\mbox{.}(2025b)]%
        {bird}
\bibfield{author}{\bibinfo{person}{Yu Feng}, \bibinfo{person}{Ben Zhou}, \bibinfo{person}{Weidong Lin}, {and} \bibinfo{person}{Dan Roth}.} \bibinfo{year}{2025}\natexlab{b}.
\newblock \showarticletitle{{BIRD}: A Trustworthy Bayesian Inference Framework for Large Language Models}. In \bibinfo{booktitle}{\emph{The Thirteenth International Conference on Learning Representations}}.
\newblock
\urldef\tempurl%
\url{https://openreview.net/forum?id=fAAaT826Vv}
\showURL{%
\tempurl}


\bibitem[Feo et~al\mbox{.}(1990)]%
        {sisal}
\bibfield{author}{\bibinfo{person}{John~T. Feo}, \bibinfo{person}{David~C. Cann}, {and} \bibinfo{person}{Rodney~R. Oldehoeft}.} \bibinfo{year}{1990}\natexlab{}.
\newblock \showarticletitle{A report on the sisal language project}.
\newblock \bibinfo{journal}{\emph{J. Parallel and Distrib. Comput.}} \bibinfo{volume}{10}, \bibinfo{number}{4} (\bibinfo{year}{1990}), \bibinfo{pages}{349--366}.
\newblock
\showISSN{0743-7315}
\urldef\tempurl%
\url{https://doi.org/10.1016/0743-7315(90)90035-N}
\showDOI{\tempurl}
\newblock
\shownote{Data-flow Processing}.


\bibitem[Flanagan et~al\mbox{.}(1993)]%
        {anf}
\bibfield{author}{\bibinfo{person}{Cormac Flanagan}, \bibinfo{person}{Amr Sabry}, \bibinfo{person}{Bruce~F. Duba}, {and} \bibinfo{person}{Matthias Felleisen}.} \bibinfo{year}{1993}\natexlab{}.
\newblock \showarticletitle{The essence of compiling with continuations}. In \bibinfo{booktitle}{\emph{Proceedings of the ACM SIGPLAN 1993 Conference on Programming Language Design and Implementation}} (Albuquerque, New Mexico, USA) \emph{(\bibinfo{series}{PLDI '93})}. \bibinfo{publisher}{Association for Computing Machinery}, \bibinfo{address}{New York, NY, USA}, \bibinfo{pages}{237–247}.
\newblock
\showISBNx{0897915984}
\urldef\tempurl%
\url{https://doi.org/10.1145/155090.155113}
\showDOI{\tempurl}


\bibitem[{Gemini Team, Google}(2023)]%
        {gemini}
\bibfield{author}{\bibinfo{person}{{Gemini Team, Google}}.} \bibinfo{year}{2023}\natexlab{}.
\newblock \showarticletitle{Gemini: A Family of Highly Capable Multimodal Models}.
\newblock \bibinfo{journal}{\emph{arXiv preprint arXiv:2312.11805}} (\bibinfo{year}{2023}).
\newblock


\bibitem[{GitHub Staff}(2025)]%
        {octoverse2025}
\bibfield{author}{\bibinfo{person}{{GitHub Staff}}.} \bibinfo{year}{2025}\natexlab{}.
\newblock \bibinfo{title}{Octoverse: A new developer joins GitHub every second as AI leads TypeScript to \#1}.
\newblock \bibinfo{howpublished}{\url{https://github.blog/news-insights/octoverse/}}.
\newblock
\newblock
\shownote{Accessed: 2026-03-30}.


\bibitem[Halstead(1985)]%
        {multilisp}
\bibfield{author}{\bibinfo{person}{Robert~H. Halstead}.} \bibinfo{year}{1985}\natexlab{}.
\newblock \showarticletitle{MULTILISP: a language for concurrent symbolic computation}.
\newblock \bibinfo{journal}{\emph{ACM Trans. Program. Lang. Syst.}} \bibinfo{volume}{7}, \bibinfo{number}{4} (\bibinfo{date}{Oct.} \bibinfo{year}{1985}), \bibinfo{pages}{501–538}.
\newblock
\showISSN{0164-0925}
\urldef\tempurl%
\url{https://doi.org/10.1145/4472.4478}
\showDOI{\tempurl}


\bibitem[He and Roy(2025)]%
        {he2025logictree}
\bibfield{author}{\bibinfo{person}{Kang He} {and} \bibinfo{person}{Kaushik Roy}.} \bibinfo{year}{2025}\natexlab{}.
\newblock \showarticletitle{LogicTree: Structured Proof Exploration for Coherent and Rigorous Logical Reasoning with Large Language Models}.
\newblock \bibinfo{journal}{\emph{arXiv preprint arXiv:2504.14089}} (\bibinfo{year}{2025}).
\newblock


\bibitem[Jimenez et~al\mbox{.}(2024)]%
        {jimenez2024swebench}
\bibfield{author}{\bibinfo{person}{Carlos~E. Jimenez}, \bibinfo{person}{John Yang}, \bibinfo{person}{S. Friedman}, {et~al\mbox{.}}} \bibinfo{year}{2024}\natexlab{}.
\newblock \showarticletitle{SWE-bench: Can Language Models Resolve Real-World GitHub Issues?}
\newblock \bibinfo{journal}{\emph{arXiv preprint arXiv:2310.06770}} (\bibinfo{year}{2024}).
\newblock


\bibitem[Jin et~al\mbox{.}(2025)]%
        {pasta}
\bibfield{author}{\bibinfo{person}{Tian Jin}, \bibinfo{person}{Ellie~Y Cheng}, \bibinfo{person}{Zachary Ankner}, \bibinfo{person}{Nikunj Saunshi}, \bibinfo{person}{Blake~M Elias}, \bibinfo{person}{Amir Yazdanbakhsh}, \bibinfo{person}{Jonathan Ragan-Kelley}, \bibinfo{person}{Suvinay Subramanian}, {and} \bibinfo{person}{Michael Carbin}.} \bibinfo{year}{2025}\natexlab{}.
\newblock \showarticletitle{Learning to Keep a Promise: Scaling Language Model Decoding Parallelism with Learned Asynchronous Decoding}. In \bibinfo{booktitle}{\emph{Forty-second International Conference on Machine Learning}}.
\newblock
\urldef\tempurl%
\url{https://openreview.net/forum?id=ZfX43ZZRZR}
\showURL{%
\tempurl}


\bibitem[Jungmair et~al\mbox{.}(2024)]%
        {hipy}
\bibfield{author}{\bibinfo{person}{Michael Jungmair}, \bibinfo{person}{Alexis Engelke}, {and} \bibinfo{person}{Jana Giceva}.} \bibinfo{year}{2024}\natexlab{}.
\newblock \showarticletitle{HiPy: Extracting High-Level Semantics from Python Code for Data Processing}.
\newblock \bibinfo{journal}{\emph{Proc. ACM Program. Lang.}} \bibinfo{volume}{8}, \bibinfo{number}{OOPSLA2}, Article \bibinfo{articleno}{297} (\bibinfo{date}{Oct.} \bibinfo{year}{2024}), \bibinfo{numpages}{27}~pages.
\newblock
\urldef\tempurl%
\url{https://doi.org/10.1145/3689737}
\showDOI{\tempurl}


\bibitem[Kallas et~al\mbox{.}(2022)]%
        {pash-jit-osdi}
\bibfield{author}{\bibinfo{person}{Konstantinos Kallas}, \bibinfo{person}{Tammam Mustafa}, \bibinfo{person}{Jan Bielak}, \bibinfo{person}{Dimitris Karnikis}, \bibinfo{person}{Thurston~HY Dang}, \bibinfo{person}{Michael Greenberg}, {and} \bibinfo{person}{Nikos Vasilakis}.} \bibinfo{year}{2022}\natexlab{}.
\newblock \showarticletitle{Practically correct,{Just-in-Time} shell script parallelization}. In \bibinfo{booktitle}{\emph{16th USENIX Symposium on Operating Systems Design and Implementation (OSDI 22)}}. \bibinfo{pages}{769--785}.
\newblock


\bibitem[Karp and Miller(1966)]%
        {Karp1966}
\bibfield{author}{\bibinfo{person}{Richard~M. Karp} {and} \bibinfo{person}{Raymond Miller}.} \bibinfo{year}{1966}\natexlab{}.
\newblock \showarticletitle{{Properties of a Model for Parallel Computation: Determinacy, Termination, Queueing}}.
\newblock \bibinfo{journal}{\emph{SlAM J. of Applied Mathematics}} \bibinfo{volume}{14}, \bibinfo{number}{6} (\bibinfo{date}{11} \bibinfo{year}{1966}), \bibinfo{pages}{1390--1411}.
\newblock


\bibitem[Karp and Miller(1969)]%
        {Karp1969}
\bibfield{author}{\bibinfo{person}{Richard~M. Karp} {and} \bibinfo{person}{Raymond Miller}.} \bibinfo{year}{1969}\natexlab{}.
\newblock \showarticletitle{Parallel Program Schemata}.
\newblock \bibinfo{journal}{\emph{J. Comput. System Sci.}}  \bibinfo{volume}{3} (\bibinfo{year}{1969}), \bibinfo{pages}{147--195}.
\newblock


\bibitem[Khattab et~al\mbox{.}(2024)]%
        {dspy}
\bibfield{author}{\bibinfo{person}{Omar Khattab}, \bibinfo{person}{Arnav Singhvi}, \bibinfo{person}{Paridhi Maheshwari}, \bibinfo{person}{Zhiyuan Zhang}, \bibinfo{person}{Keshav Santhanam}, \bibinfo{person}{Sri Vardhamanan}, \bibinfo{person}{Saiful Haq}, \bibinfo{person}{Ashutosh Sharma}, \bibinfo{person}{Thomas~T. Joshi}, \bibinfo{person}{Hanna Moazam}, \bibinfo{person}{Heather Miller}, \bibinfo{person}{Matei Zaharia}, {and} \bibinfo{person}{Christopher Potts}.} \bibinfo{year}{2024}\natexlab{}.
\newblock \showarticletitle{DSPy: Compiling Declarative Language Model Calls into Self-Improving Pipelines}.
\newblock \bibinfo{journal}{\emph{The Twelfth International Conference on Learning Representations}}.
\newblock


\bibitem[Lam et~al\mbox{.}(2015)]%
        {numba}
\bibfield{author}{\bibinfo{person}{Siu~Kwan Lam}, \bibinfo{person}{Antoine Pitrou}, {and} \bibinfo{person}{Stanley Seibert}.} \bibinfo{year}{2015}\natexlab{}.
\newblock \showarticletitle{Numba: A llvm-based python jit compiler}. In \bibinfo{booktitle}{\emph{Proceedings of the Second Workshop on the LLVM Compiler Infrastructure in HPC}}. \bibinfo{pages}{1--6}.
\newblock


\bibitem[{LangChain Inc.}(2024)]%
        {langgraph}
\bibfield{author}{\bibinfo{person}{{LangChain Inc.}}} \bibinfo{year}{2024}\natexlab{}.
\newblock \bibinfo{booktitle}{\emph{{LangGraph}: Build Resilient Language Agents as Graphs}}.
\newblock
\urldef\tempurl%
\url{https://github.com/langchain-ai/langgraph}
\showURL{%
\tempurl}


\bibitem[Lewis et~al\mbox{.}(2020)]%
        {rag}
\bibfield{author}{\bibinfo{person}{Patrick Lewis}, \bibinfo{person}{Ethan Perez}, \bibinfo{person}{Aleksandra Piktus}, \bibinfo{person}{Fabio Petroni}, \bibinfo{person}{Vladimir Karpukhin}, \bibinfo{person}{Naman Goyal}, \bibinfo{person}{Heinrich K{\"u}ttler}, \bibinfo{person}{Mike Lewis}, \bibinfo{person}{Wen-tau Yih}, \bibinfo{person}{Tim Rockt{\"a}schel}, {et~al\mbox{.}}} \bibinfo{year}{2020}\natexlab{}.
\newblock \showarticletitle{Retrieval-augmented generation for knowledge-intensive nlp tasks}.
\newblock \bibinfo{journal}{\emph{Advances in neural information processing systems}}  \bibinfo{volume}{33} (\bibinfo{year}{2020}), \bibinfo{pages}{9459--9474}.
\newblock


\bibitem[Li et~al\mbox{.}(2024)]%
        {traq}
\bibfield{author}{\bibinfo{person}{Shuo Li}, \bibinfo{person}{Sangdon Park}, \bibinfo{person}{Insup Lee}, {and} \bibinfo{person}{Osbert Bastani}.} \bibinfo{year}{2024}\natexlab{}.
\newblock \showarticletitle{{TRAQ}: Trustworthy Retrieval Augmented Question Answering via Conformal Prediction}. In \bibinfo{booktitle}{\emph{Proceedings of the 2024 Conference of the North American Chapter of the Association for Computational Linguistics: Human Language Technologies (Volume 1: Long Papers)}}, \bibfield{editor}{\bibinfo{person}{Kevin Duh}, \bibinfo{person}{Helena Gomez}, {and} \bibinfo{person}{Steven Bethard}} (Eds.). \bibinfo{publisher}{Association for Computational Linguistics}, \bibinfo{address}{Mexico City, Mexico}, \bibinfo{pages}{3799--3821}.
\newblock
\urldef\tempurl%
\url{https://doi.org/10.18653/v1/2024.naacl-long.210}
\showDOI{\tempurl}


\bibitem[Liang et~al\mbox{.}(2025)]%
        {dataflow-llm}
\bibfield{author}{\bibinfo{person}{Hao Liang}, \bibinfo{person}{Xiaochen Ma}, \bibinfo{person}{Zhou Liu}, \bibinfo{person}{Zhen~Hao Wong}, \bibinfo{person}{Zhengyang Zhao}, \bibinfo{person}{Zimo Meng}, \bibinfo{person}{Runming He}, \bibinfo{person}{Chengyu Shen}, \bibinfo{person}{Qifeng Cai}, \bibinfo{person}{Zhaoyang Han}, {et~al\mbox{.}}} \bibinfo{year}{2025}\natexlab{}.
\newblock \showarticletitle{DataFlow: An LLM-Driven Framework for Unified Data Preparation and Workflow Automation in the Era of Data-Centric AI}.
\newblock \bibinfo{journal}{\emph{arXiv preprint arXiv:2512.16676}} (\bibinfo{year}{2025}).
\newblock


\bibitem[Liu et~al\mbox{.}(2024)]%
        {apar}
\bibfield{author}{\bibinfo{person}{Mingdao Liu}, \bibinfo{person}{Aohan Zeng}, \bibinfo{person}{Bowen Wang}, \bibinfo{person}{Peng Zhang}, \bibinfo{person}{Jie Tang}, {and} \bibinfo{person}{Yuxiao Dong}.} \bibinfo{year}{2024}\natexlab{}.
\newblock \bibinfo{title}{APAR: LLMs Can Do Auto-Parallel Auto-Regressive Decoding}.
\newblock
\newblock
\showeprint[arxiv]{2401.06761}~[cs.CL]
\urldef\tempurl%
\url{https://arxiv.org/abs/2401.06761}
\showURL{%
\tempurl}


\bibitem[Maessen et~al\mbox{.}(1995)]%
        {maessen1995semantics}
\bibfield{author}{\bibinfo{person}{Shail Aditya Arvind Jan-Willem Maessen}, \bibinfo{person}{Lennart Augustsson}, {and} \bibinfo{person}{Rishiyur~S Nikhil}.} \bibinfo{year}{1995}\natexlab{}.
\newblock \showarticletitle{Semantics of pH: A parallel dialect of Haskell}. In \bibinfo{booktitle}{\emph{In Proceedings from the Haskell Workshop (at FPCA 95)}}. \bibinfo{pages}{35--49}.
\newblock


\bibitem[McGraw(1982)]%
        {mcgraw1982val}
\bibfield{author}{\bibinfo{person}{James~R McGraw}.} \bibinfo{year}{1982}\natexlab{}.
\newblock \showarticletitle{The VAL language: Description and analysis}.
\newblock \bibinfo{journal}{\emph{ACM Transactions on Programming Languages and Systems (TOPLAS)}} \bibinfo{volume}{4}, \bibinfo{number}{1} (\bibinfo{year}{1982}), \bibinfo{pages}{44--82}.
\newblock


\bibitem[Mell et~al\mbox{.}(2025)]%
        {opal}
\bibfield{author}{\bibinfo{person}{Stephen Mell}, \bibinfo{person}{Konstantinos Kallas}, \bibinfo{person}{Steve Zdancewic}, {and} \bibinfo{person}{Osbert Bastani}.} \bibinfo{year}{2025}\natexlab{}.
\newblock \showarticletitle{Opportunistically Parallel Lambda Calculus}.
\newblock \bibinfo{journal}{\emph{Proc. ACM Program. Lang.}} \bibinfo{volume}{9}, \bibinfo{number}{OOPSLA2}, Article \bibinfo{articleno}{365} (\bibinfo{date}{Oct.} \bibinfo{year}{2025}), \bibinfo{numpages}{27}~pages.
\newblock
\urldef\tempurl%
\url{https://doi.org/10.1145/3763143}
\showDOI{\tempurl}


\bibitem[Mustafa et~al\mbox{.}(2023)]%
        {dish}
\bibfield{author}{\bibinfo{person}{Tammam Mustafa}, \bibinfo{person}{Konstantinos Kallas}, \bibinfo{person}{Pratyush Das}, {and} \bibinfo{person}{Nikos Vasilakis}.} \bibinfo{year}{2023}\natexlab{}.
\newblock \showarticletitle{{DiSh}: Dynamic {Shell-Script} Distribution}. In \bibinfo{booktitle}{\emph{20th USENIX Symposium on Networked Systems Design and Implementation (NSDI 23)}}. \bibinfo{pages}{341--356}.
\newblock


\bibitem[{n8n.io}(2026)]%
        {n8n}
\bibfield{author}{\bibinfo{person}{{n8n.io}}.} \bibinfo{year}{2026}\natexlab{}.
\newblock \bibinfo{booktitle}{\emph{n8n: Fair-code workflow automation platform with native AI capabilities}}.
\newblock
\urldef\tempurl%
\url{https://github.com/n8n-io/n8n}
\showURL{%
\tempurl}


\bibitem[Ni et~al\mbox{.}(2025)]%
        {tree-of-code}
\bibfield{author}{\bibinfo{person}{Ziyi Ni}, \bibinfo{person}{Yifan Li}, \bibinfo{person}{Ning Yang}, \bibinfo{person}{Dou Shen}, \bibinfo{person}{Pin Lyu}, {and} \bibinfo{person}{Daxiang Dong}.} \bibinfo{year}{2025}\natexlab{}.
\newblock \showarticletitle{Tree-of-code: A self-growing tree framework for end-to-end code generation and execution in complex tasks}. In \bibinfo{booktitle}{\emph{Findings of the Association for Computational Linguistics: ACL 2025}}. \bibinfo{pages}{9804--9819}.
\newblock


\bibitem[Ning et~al\mbox{.}(2024)]%
        {skeletonofthought}
\bibfield{author}{\bibinfo{person}{Xuefei Ning}, \bibinfo{person}{Zinan Lin}, \bibinfo{person}{Zixuan Zhou}, \bibinfo{person}{Zifu Wang}, \bibinfo{person}{Huazhong Yang}, {and} \bibinfo{person}{Yu Wang}.} \bibinfo{year}{2024}\natexlab{}.
\newblock \showarticletitle{Skeleton-of-Thought: Prompting {LLM}s for Efficient Parallel Generation}. In \bibinfo{booktitle}{\emph{The Twelfth International Conference on Learning Representations}}.
\newblock
\urldef\tempurl%
\url{https://openreview.net/forum?id=mqVgBbNCm9}
\showURL{%
\tempurl}


\bibitem[{OpenAI}(2023)]%
        {gpt}
\bibfield{author}{\bibinfo{person}{{OpenAI}}.} \bibinfo{year}{2023}\natexlab{}.
\newblock \bibinfo{booktitle}{\emph{{GPT}-4 Technical Report}}.
\newblock \bibinfo{type}{{T}echnical {R}eport}. \bibinfo{institution}{OpenAI}.
\newblock
\newblock
\shownote{arXiv preprint arXiv:2303.08774}.


\bibitem[{OpenAI}(2025)]%
        {openai-agents-sdk}
\bibfield{author}{\bibinfo{person}{{OpenAI}}.} \bibinfo{year}{2025}\natexlab{}.
\newblock \bibinfo{booktitle}{\emph{{OpenAI} Agents {SDK}}}.
\newblock
\urldef\tempurl%
\url{https://github.com/openai/openai-agents-python}
\showURL{%
\tempurl}


\bibitem[Palkar et~al\mbox{.}(2017)]%
        {weld}
\bibfield{author}{\bibinfo{person}{Shoumik Palkar}, \bibinfo{person}{James~J Thomas}, \bibinfo{person}{Anil Shanbhag}, \bibinfo{person}{Deepak Narayanan}, \bibinfo{person}{Holger Pirk}, \bibinfo{person}{Malte Schwarzkopf}, \bibinfo{person}{Saman Amarasinghe}, {and} \bibinfo{person}{Matei Zaharia}.} \bibinfo{year}{2017}\natexlab{}.
\newblock \showarticletitle{Weld: A common runtime for high performance data analytics}.
\newblock  (\bibinfo{year}{2017}).
\newblock


\bibitem[Palkar and Zaharia(2019)]%
        {mozart}
\bibfield{author}{\bibinfo{person}{Shoumik Palkar} {and} \bibinfo{person}{Matei Zaharia}.} \bibinfo{year}{2019}\natexlab{}.
\newblock \showarticletitle{Optimizing Data-Intensive Computations in Existing Libraries with Split Annotations}. In \bibinfo{booktitle}{\emph{Proceedings of the 27th ACM Symposium on Operating Systems Principles}} (Huntsville, ON, Canada) \emph{(\bibinfo{series}{SOSP '19})}. \bibinfo{publisher}{ACM}, \bibinfo{pages}{291--305}.
\newblock
\urldef\tempurl%
\url{https://doi.org/10.1145/3341301.3359652}
\showDOI{\tempurl}


\bibitem[Politz et~al\mbox{.}(2013)]%
        {lambdapi}
\bibfield{author}{\bibinfo{person}{Joe~Gibbs Politz}, \bibinfo{person}{Alejandro Martinez}, \bibinfo{person}{Mae Milano}, \bibinfo{person}{Sumner Warren}, \bibinfo{person}{Daniel Patterson}, \bibinfo{person}{Junsong Li}, \bibinfo{person}{Anand Chitipothu}, {and} \bibinfo{person}{Shriram Krishnamurthi}.} \bibinfo{year}{2013}\natexlab{}.
\newblock \showarticletitle{Python: the full monty}.
\newblock \bibinfo{journal}{\emph{SIGPLAN Not.}} \bibinfo{volume}{48}, \bibinfo{number}{10} (\bibinfo{date}{Oct.} \bibinfo{year}{2013}), \bibinfo{pages}{217–232}.
\newblock
\showISSN{0362-1340}
\urldef\tempurl%
\url{https://doi.org/10.1145/2544173.2509536}
\showDOI{\tempurl}


\bibitem[Raghavan et~al\mbox{.}(2020)]%
        {posh}
\bibfield{author}{\bibinfo{person}{Deepti Raghavan}, \bibinfo{person}{Sadjad Fouladi}, \bibinfo{person}{Philip Levis}, {and} \bibinfo{person}{Matei Zaharia}.} \bibinfo{year}{2020}\natexlab{}.
\newblock \showarticletitle{{POSH}: A {Data-Aware} Shell}. In \bibinfo{booktitle}{\emph{2020 USENIX Annual Technical Conference (USENIX ATC 20)}}. \bibinfo{pages}{617--631}.
\newblock


\bibitem[Rodriguez(1969)]%
        {Rodriguez1969}
\bibfield{author}{\bibinfo{person}{Jorge~E. Rodriguez}.} \bibinfo{year}{1969}\natexlab{}.
\newblock \emph{\bibinfo{title}{A Graph Model for Parallel Computations}}.
\newblock \bibinfo{thesistype}{Ph.\,D. Dissertation}.
\newblock
\newblock
\shownote{MIT-LCS-TR64}.


\bibitem[Romera-Paredes et~al\mbox{.}(2024)]%
        {funsearch}
\bibfield{author}{\bibinfo{person}{Bernardino Romera-Paredes}, \bibinfo{person}{Mohammadamin Barekatain}, \bibinfo{person}{Alexander Novikov}, \bibinfo{person}{Matej Balog}, \bibinfo{person}{M.~Pawan Kumar}, \bibinfo{person}{Emilien Dupont}, \bibinfo{person}{Francisco J.~R. Ruiz}, \bibinfo{person}{Jordan~S. Ellenberg}, \bibinfo{person}{Pengming Wang}, \bibinfo{person}{Omar Fawzi}, \bibinfo{person}{Pushmeet Kohli}, {and} \bibinfo{person}{Alhussein Fawzi}.} \bibinfo{year}{2024}\natexlab{}.
\newblock \showarticletitle{Mathematical discoveries from program search with large language models}.
\newblock \bibinfo{journal}{\emph{Nature}} \bibinfo{volume}{625}, \bibinfo{number}{7995} (\bibinfo{year}{2024}), \bibinfo{pages}{468--475}.
\newblock
\urldef\tempurl%
\url{https://doi.org/10.1038/s41586-023-06924-6}
\showDOI{\tempurl}


\bibitem[Santhanam et~al\mbox{.}(2024)]%
        {alto}
\bibfield{author}{\bibinfo{person}{Keshav Santhanam}, \bibinfo{person}{Deepti Raghavan}, \bibinfo{person}{Muhammad~Shahir Rahman}, \bibinfo{person}{Thejas Venkatesh}, \bibinfo{person}{Neha Kunjal}, \bibinfo{person}{Pratiksha Thaker}, \bibinfo{person}{Philip Levis}, {and} \bibinfo{person}{Matei Zaharia}.} \bibinfo{year}{2024}\natexlab{}.
\newblock \showarticletitle{ALTO: An Efficient Network Orchestrator for Compound AI Systems}. In \bibinfo{booktitle}{\emph{Proceedings of the 4th Workshop on Machine Learning and Systems}} (Athens, Greece) \emph{(\bibinfo{series}{EuroMLSys '24})}. \bibinfo{publisher}{Association for Computing Machinery}, \bibinfo{address}{New York, NY, USA}, \bibinfo{pages}{117–125}.
\newblock
\showISBNx{9798400705410}
\urldef\tempurl%
\url{https://doi.org/10.1145/3642970.3655844}
\showDOI{\tempurl}


\bibitem[Sastry and Ju(1998)]%
        {register-promotion}
\bibfield{author}{\bibinfo{person}{A.V.S. Sastry} {and} \bibinfo{person}{Roy~D.C. Ju}.} \bibinfo{year}{1998}\natexlab{}.
\newblock \showarticletitle{A New Algorithm for Scalar Register Promotion Based on SSA Form}.
\newblock \bibinfo{journal}{\emph{PLDI '98: Proceedings of the ACM SIGPLAN 1998 conference on Programming language design and implementation}} (\bibinfo{year}{1998}).
\newblock


\bibitem[Shajii et~al\mbox{.}(2023)]%
        {codon}
\bibfield{author}{\bibinfo{person}{Ariya Shajii}, \bibinfo{person}{Gabriel Ramirez}, \bibinfo{person}{Haris Smajlovi\'{c}}, \bibinfo{person}{Jessica Ray}, \bibinfo{person}{Bonnie Berger}, \bibinfo{person}{Saman Amarasinghe}, {and} \bibinfo{person}{Ibrahim Numanagi\'{c}}.} \bibinfo{year}{2023}\natexlab{}.
\newblock \showarticletitle{Codon: A Compiler for High-Performance Pythonic Applications and DSLs}. In \bibinfo{booktitle}{\emph{Proceedings of the 32nd ACM SIGPLAN International Conference on Compiler Construction}} (Montr\'{e}al, QC, Canada) \emph{(\bibinfo{series}{CC 2023})}. \bibinfo{publisher}{Association for Computing Machinery}, \bibinfo{address}{New York, NY, USA}, \bibinfo{pages}{191–202}.
\newblock
\showISBNx{9798400700880}
\urldef\tempurl%
\url{https://doi.org/10.1145/3578360.3580275}
\showDOI{\tempurl}


\bibitem[Silva et~al\mbox{.}(2024)]%
        {totdomainmodeling}
\bibfield{author}{\bibinfo{person}{Jonathan Silva}, \bibinfo{person}{Qin Ma}, \bibinfo{person}{Jordi Cabot}, \bibinfo{person}{Pierre Kelsen}, {and} \bibinfo{person}{Henderik~A. Proper}.} \bibinfo{year}{2024}\natexlab{}.
\newblock \showarticletitle{Application of the Tree-of-Thoughts Framework to LLM-Enabled Domain Modeling}. In \bibinfo{booktitle}{\emph{Conceptual Modeling: 43rd International Conference, ER 2024, Pittsburgh, PA, USA, October 28–31, 2024, Proceedings}} (Pittsburg, PA, USA). \bibinfo{publisher}{Springer-Verlag}, \bibinfo{address}{Berlin, Heidelberg}, \bibinfo{pages}{94–111}.
\newblock
\showISBNx{978-3-031-75871-3}
\urldef\tempurl%
\url{https://doi.org/10.1007/978-3-031-75872-0_6}
\showDOI{\tempurl}


\bibitem[Spiegelberg et~al\mbox{.}(2021)]%
        {tuplex}
\bibfield{author}{\bibinfo{person}{Leonhard Spiegelberg}, \bibinfo{person}{Rahul Yesantharao}, \bibinfo{person}{Malte Schwarzkopf}, {and} \bibinfo{person}{Tim Kraska}.} \bibinfo{year}{2021}\natexlab{}.
\newblock \showarticletitle{Tuplex: Data Science in Python at Native Code Speed}. In \bibinfo{booktitle}{\emph{Proceedings of the 2021 International Conference on Management of Data}} (Virtual Event, China) \emph{(\bibinfo{series}{SIGMOD '21})}. \bibinfo{publisher}{Association for Computing Machinery}, \bibinfo{address}{New York, NY, USA}, \bibinfo{pages}{1718–1731}.
\newblock
\showISBNx{9781450383431}
\urldef\tempurl%
\url{https://doi.org/10.1145/3448016.3457244}
\showDOI{\tempurl}


\bibitem[Suris et~al\mbox{.}(2023)]%
        {suris2023vipergpt}
\bibfield{author}{\bibinfo{person}{David Suris} {et~al\mbox{.}}} \bibinfo{year}{2023}\natexlab{}.
\newblock \showarticletitle{ViperGPT: Visual Inference via Python Execution for Reasoning}.
\newblock \bibinfo{journal}{\emph{arXiv preprint arXiv:2303.08128}} (\bibinfo{year}{2023}).
\newblock


\bibitem[Trinh et~al\mbox{.}(2024)]%
        {alphageometry}
\bibfield{author}{\bibinfo{person}{Trieu Trinh}, \bibinfo{person}{Yuhuai Wu}, \bibinfo{person}{Quoc~V. Le}, \bibinfo{person}{He He}, {and} \bibinfo{person}{Minh-Thang Luong}.} \bibinfo{year}{2024}\natexlab{}.
\newblock \showarticletitle{Solving Olympiad Geometry without Human Demonstrations}.
\newblock \bibinfo{journal}{\emph{Nature}}  \bibinfo{volume}{625} (\bibinfo{year}{2024}), \bibinfo{pages}{476--482}.
\newblock
\urldef\tempurl%
\url{https://doi.org/10.1038/s41586-023-06747-5}
\showDOI{\tempurl}


\bibitem[Vasilakis et~al\mbox{.}(2021)]%
        {pash:eurosys:2021}
\bibfield{author}{\bibinfo{person}{Nikos Vasilakis}, \bibinfo{person}{Konstantinos Kallas}, \bibinfo{person}{Konstantinos Mamouras}, \bibinfo{person}{Achilles Benetopoulos}, {and} \bibinfo{person}{Lazar Cvetkovi\'{c}}.} \bibinfo{year}{2021}\natexlab{}.
\newblock \showarticletitle{PaSh: Light-Touch Data-Parallel Shell Processing}. In \bibinfo{booktitle}{\emph{Proceedings of the Sixteenth European Conference on Computer Systems}} (Online Event, United Kingdom) \emph{(\bibinfo{series}{EuroSys '21})}. \bibinfo{publisher}{Association for Computing Machinery}, \bibinfo{address}{New York, NY, USA}, \bibinfo{pages}{49--66}.
\newblock
\showISBNx{9781450383349}
\urldef\tempurl%
\url{https://doi.org/10.1145/3447786.3456228}
\showDOI{\tempurl}


\bibitem[Vasilakis et~al\mbox{.}(2019)]%
        {ignis:pldi:2019}
\bibfield{author}{\bibinfo{person}{Nikos Vasilakis}, \bibinfo{person}{Ben Karel}, \bibinfo{person}{Yash Palkhiwala}, \bibinfo{person}{John Sonchack}, \bibinfo{person}{Andr{\'e} DeHon}, {and} \bibinfo{person}{Jonathan~M. Smith}.} \bibinfo{year}{2019}\natexlab{}.
\newblock \showarticletitle{Ignis: Scaling Distribution-Oblivious Systems with Light-Touch Distribution}. In \bibinfo{booktitle}{\emph{Proceedings of the 40th ACM SIGPLAN Conference on Programming Language Design and Implementation}} (Phoenix, AZ, USA) \emph{(\bibinfo{series}{PLDI 2019})}. \bibinfo{publisher}{ACM}, \bibinfo{pages}{1010--1026}.
\newblock
\urldef\tempurl%
\url{https://doi.org/10.1145/3314221.3314586}
\showDOI{\tempurl}


\bibitem[Wadler(1990)]%
        {wadler-monads}
\bibfield{author}{\bibinfo{person}{Philip Wadler}.} \bibinfo{year}{1990}\natexlab{}.
\newblock \showarticletitle{Comprehending monads}. In \bibinfo{booktitle}{\emph{Proceedings of the 1990 ACM Conference on LISP and Functional Programming}} (Nice, France) \emph{(\bibinfo{series}{LFP '90})}. \bibinfo{publisher}{Association for Computing Machinery}, \bibinfo{address}{New York, NY, USA}, \bibinfo{pages}{61–78}.
\newblock
\showISBNx{089791368X}
\urldef\tempurl%
\url{https://doi.org/10.1145/91556.91592}
\showDOI{\tempurl}


\bibitem[Whiting and Pascoe(1994)]%
        {dataflow-history}
\bibfield{author}{\bibinfo{person}{Paul~G. Whiting} {and} \bibinfo{person}{Robert S.~V. Pascoe}.} \bibinfo{year}{1994}\natexlab{}.
\newblock \showarticletitle{A history of data-flow languages}.
\newblock \bibinfo{journal}{\emph{IEEE Annals of the History of Computing}}  \bibinfo{volume}{16} (\bibinfo{year}{1994}), \bibinfo{pages}{38--59}.
\newblock
\urldef\tempurl%
\url{https://api.semanticscholar.org/CorpusID:7384421}
\showURL{%
\tempurl}


\bibitem[Willard et~al\mbox{.}(2023)]%
        {guidance}
\bibfield{author}{\bibinfo{person}{Brandon~T. Willard} {et~al\mbox{.}}} \bibinfo{year}{2023}\natexlab{}.
\newblock \bibinfo{booktitle}{\emph{Guidance: A Guidance Language for Controlling Large Language Models}}.
\newblock
\urldef\tempurl%
\url{https://github.com/guidance-ai/guidance}
\showURL{%
\tempurl}


\bibitem[Wu et~al\mbox{.}(2025)]%
        {mirage}
\bibfield{author}{\bibinfo{person}{Mengdi Wu}, \bibinfo{person}{Xinhao Cheng}, \bibinfo{person}{Shengyu Liu}, \bibinfo{person}{Chunan Shi}, \bibinfo{person}{Jianan Ji}, \bibinfo{person}{Kit Ao}, \bibinfo{person}{Praveen Velliengiri}, \bibinfo{person}{Xupeng Miao}, \bibinfo{person}{Oded Padon}, {and} \bibinfo{person}{Zhihao Jia}.} \bibinfo{year}{2025}\natexlab{}.
\newblock \showarticletitle{Mirage: A Multi-Level Superoptimizer for Tensor Programs}. In \bibinfo{booktitle}{\emph{19th USENIX Symposium on Operating Systems Design and Implementation (OSDI 25)}}. \bibinfo{publisher}{USENIX Association}, \bibinfo{pages}{1--18}.
\newblock


\bibitem[Xu et~al\mbox{.}(2025)]%
        {xu2025embodiedtreethoughtsdeliberate}
\bibfield{author}{\bibinfo{person}{Wenjiang Xu}, \bibinfo{person}{Cindy Wang}, \bibinfo{person}{Rui Fang}, \bibinfo{person}{Mingkang Zhang}, \bibinfo{person}{Lusong Li}, \bibinfo{person}{Jing Xu}, \bibinfo{person}{Jiayuan Gu}, \bibinfo{person}{Zecui Zeng}, {and} \bibinfo{person}{Rui Chen}.} \bibinfo{year}{2025}\natexlab{}.
\newblock \bibinfo{title}{Embodied Tree of Thoughts: Deliberate Manipulation Planning with Embodied World Model}.
\newblock
\newblock
\showeprint[arxiv]{2512.08188}~[cs.RO]
\urldef\tempurl%
\url{https://arxiv.org/abs/2512.08188}
\showURL{%
\tempurl}


\bibitem[Yang et~al\mbox{.}(2024)]%
        {sweagent}
\bibfield{author}{\bibinfo{person}{John Yang}, \bibinfo{person}{Carlos~E Jimenez}, \bibinfo{person}{Alexander Wettig}, \bibinfo{person}{Kilian Lieret}, \bibinfo{person}{Shunyu Yao}, \bibinfo{person}{Karthik Narasimhan}, {and} \bibinfo{person}{Ofir Press}.} \bibinfo{year}{2024}\natexlab{}.
\newblock \showarticletitle{Swe-agent: Agent-computer interfaces enable automated software engineering}.
\newblock \bibinfo{journal}{\emph{Advances in Neural Information Processing Systems}}  \bibinfo{volume}{37} (\bibinfo{year}{2024}), \bibinfo{pages}{50528--50652}.
\newblock


\bibitem[Yao et~al\mbox{.}(2023)]%
        {treeofthoughts}
\bibfield{author}{\bibinfo{person}{Shunyu Yao}, \bibinfo{person}{Dian Yu}, \bibinfo{person}{Jeffrey Zhao}, \bibinfo{person}{Izhak Shafran}, \bibinfo{person}{Tom Griffiths}, \bibinfo{person}{Yuan Cao}, {and} \bibinfo{person}{Karthik Narasimhan}.} \bibinfo{year}{2023}\natexlab{}.
\newblock \showarticletitle{Tree of thoughts: Deliberate problem solving with large language models}.
\newblock \bibinfo{journal}{\emph{Advances in neural information processing systems}}  \bibinfo{volume}{36} (\bibinfo{year}{2023}), \bibinfo{pages}{11809--11822}.
\newblock


\bibitem[Zaharia et~al\mbox{.}(2024)]%
        {compound-ai-blog}
\bibfield{author}{\bibinfo{person}{Matei Zaharia}, \bibinfo{person}{Omar Khattab}, \bibinfo{person}{Lingjiao Chen}, \bibinfo{person}{Jared~Quincy Davis}, \bibinfo{person}{Heather Miller}, \bibinfo{person}{Chris Potts}, \bibinfo{person}{James Zou}, \bibinfo{person}{Michael Carbin}, \bibinfo{person}{Jonathan Frankle}, \bibinfo{person}{Naveen Rao}, {and} \bibinfo{person}{Ali Ghodsi}.} \bibinfo{year}{2024}\natexlab{}.
\newblock \bibinfo{title}{The Shift from Models to Compound AI Systems}.
\newblock \bibinfo{howpublished}{\url{https://bair.berkeley.edu/blog/2024/02/18/compound-ai-systems/}}.
\newblock


\bibitem[Zhang et~al\mbox{.}(2025)]%
        {rlm}
\bibfield{author}{\bibinfo{person}{Alex~L Zhang}, \bibinfo{person}{Tim Kraska}, {and} \bibinfo{person}{Omar Khattab}.} \bibinfo{year}{2025}\natexlab{}.
\newblock \showarticletitle{Recursive language models}.
\newblock \bibinfo{journal}{\emph{arXiv preprint arXiv:2512.24601}} (\bibinfo{year}{2025}).
\newblock


\bibitem[Zheng et~al\mbox{.}(2024)]%
        {sglang}
\bibfield{author}{\bibinfo{person}{Lianmin Zheng}, \bibinfo{person}{Liangsheng Yin}, \bibinfo{person}{Zhiqiang Xie}, \bibinfo{person}{Chuyue Sun}, \bibinfo{person}{Jeff Huang}, \bibinfo{person}{Cody~Hao Yu}, \bibinfo{person}{Shiyi Cao}, \bibinfo{person}{Christos Kozyrakis}, \bibinfo{person}{Ion Stoica}, \bibinfo{person}{Joseph~E. Gonzalez}, \bibinfo{person}{Clark Barrett}, {and} \bibinfo{person}{Ying Sheng}.} \bibinfo{year}{2024}\natexlab{}.
\newblock \showarticletitle{SGLang: Efficient Execution of Structured Language Model Programs}. In \bibinfo{booktitle}{\emph{Advances in Neural Information Processing Systems}}, \bibfield{editor}{\bibinfo{person}{A.~Globerson}, \bibinfo{person}{L.~Mackey}, \bibinfo{person}{D.~Belgrave}, \bibinfo{person}{A.~Fan}, \bibinfo{person}{U.~Paquet}, \bibinfo{person}{J.~Tomczak}, {and} \bibinfo{person}{C.~Zhang}} (Eds.), Vol.~\bibinfo{volume}{37}. \bibinfo{publisher}{Curran Associates, Inc.}, \bibinfo{pages}{62557--62583}.
\newblock
\urldef\tempurl%
\url{https://doi.org/10.52202/079017-2000}
\showDOI{\tempurl}


\bibitem[Zhou et~al\mbox{.}(2024)]%
        {appy}
\bibfield{author}{\bibinfo{person}{Tong Zhou}, \bibinfo{person}{Jun Shirako}, {and} \bibinfo{person}{Vivek Sarkar}.} \bibinfo{year}{2024}\natexlab{}.
\newblock \showarticletitle{APPy: Annotated Parallelism for Python on GPUs}. In \bibinfo{booktitle}{\emph{Proceedings of the 33rd ACM SIGPLAN International Conference on Compiler Construction}}. \bibinfo{pages}{113--125}.
\newblock


\end{thebibliography}
